\newtheorem{corollary}{Corollary}
\newtheorem{hyp}{Hypothesis}
\newcommand{\uvec}{\hat{\bm{e}}}
\newcommand{\vq}{\bm{v}_0}
\newcommand{\iu}{\mathrm{i}}
\newcommand{\sgn}{\mathrm{sgn}}
\shorttitle{Selection rules for triadic resonance in columnar vortices}
\title{Perturbation analysis of triadic resonance in columnar vortices: selection rules and the roles of external forcing and critical layers}
\author{Jinge Wang\aff{1}, Sangjoon Lee\aff{2} \and Philip S. Marcus\aff{1} \corresp{\email{pmarcus@me.berkeley.edu}}}
\affiliation{\aff{1}Department of Mechanical Engineering, University of California, Berkeley, CA 94720, USA
\aff{2}Department of Aeronautics and Astronautics, Stanford University, Stanford, CA 94305, USA}
\begin{document}

\maketitle

\begin{abstract}
The remarkable robustness of isolated columnar vortices suggests the existence of fundamental constraints that prevent spontaneous disintegration. In this work, we investigate the weakly nonlinear stability of such flows, demonstrating that the triadic resonance of wave modes is governed by a set of hydrodynamic ``selection rules''. By employing a multi-scale perturbation analysis, we prove that resonant interactions between smooth neutral modes, specifically regular Kelvin waves and discrete critical layer modes with passive singularities, are strictly conservative and confined to the Manley--Rowe relations. Using wave pseudoenergy within a large-$k$ WKBJ framework, we show that these rules topologically prohibit intrinsic instability, analogous to the forbidden transitions of quantum mechanics. Consequently, the breakdown of a columnar vortex requires a specific symmetry-breaking mechanism to overcome this barrier. We identify and analyse two distinct pathways for this violation: (1) \textit{Parametric instability}, where external forcing acts as an active energy pump; using a robust tuning method based on non-degenerate perturbation theory, we generalize classical elliptical instability to arbitrary driving frequencies and identify new instability configurations involving dicrete critical layer modes. (2) \textit{Active critical layers}, where an embedded singularity breaks the Hermitian symmetry of the operator, enabling the extraction of mean-flow energy via a wave-mean resonance. These findings provide a theoretical guidance for flow control, suggesting that the mitigation of aircraft wake vortices requires either tuned external forcing or the ``engineering'' of critical layers (e.g., via thermal stratification) to trigger the forbidden transitions.
\end{abstract}

\section{Introduction}
The ubiquity of columnar vortices in nature and engineering, ranging from the enduring Great Red Spot of Jupiter to the intense swirl of tornadoes and the hazardous wake turbulence of aircrafts, underscores the necessity of understanding their stability mechanisms. Historically, the study of fluid stability originated from linear stability theory, dating back to Lord \citet{Kelvin_1880} and his seminal analysis of the Rankine vortex. While this framework has successfully classified the discrete spectrum of regular modes (Kelvin waves) and accurately predicts the initial growth of disturbances in flows subject to centrifugal or Rayleigh--Taylor instabilities \citep{Rayleigh_1916, Drazin_2004}, it faces significant limitations when applied to robust, long-lived structures. For vortices such as aircraft wake vortices (AWV), 
linear theory in the inviscid limit implies neutral stability due to the temporal symmetry of the Euler equations. Consequently, linear analysis alone cannot account for the evolution and eventual breakdown of these coherent structures. To elucidate the mechanisms driving their destabilization, it is necessary to transcend the linear approximation and consider the effects of higher-order terms \citep{Schmid_Marsden_Sirovich_2001}. These nonlinear interactions form the basis of weakly nonlinear theory.

In the context of weakly nonlinear theory, significant attention has been focused on vortices subjected to sustained external deformation \citep[see][]{kerswell_2002}. Classical examples include the elliptical instability driven by an external strain field \citep{mcewan_1970, saffman_1975, tsai_widnall_1976, dizes_1999, kerswell_1999, kerswell_2002, dizes_2005, dizes_2007, feys_maslowe_2016, mora_2021}, the precessional instability inside a fluid cylinder \citep{mahalov_1993, Manasseh_1992, Manasseh_1994, Kobine_1995, meunier_2008, eloy_2011, meunier_2015, lopez_2018, meunier_2018}, and the curvature instability arising from the bending of the vortex filament \citep{fukumoto_2005, dizes_2017}. Mathematically, these phenomena constitute a subset of parametric instabilities where two disturbance waves couple with a third, externally imposed pumping mode. While these studies establish a robust framework for forced vortices, there remains a critical gap in understanding the stability of \textit{isolated} vortices, where no such sustained external deformation exists.

This gap is particularly relevant to the mitigation of aircraft wake vortices. These long-lived, counter-rotating vortices are generated at the wingtips and pose a significant hazard to following aircraft \citep{hallock_2018}. The lifecycle of an AWV system typically involves an initial phase where the two vortices are sufficiently separated to be treated as isolated. Over time, viscous diffusion expands the vortex cores until they begin to impose a mutual strain effect on each other. This mutual interaction triggers mechanisms such as the long-wave Crow instability \citep{crow_1970} or the short-wave elliptical instability \citep{saffman_1975, tsai_widnall_1976, leweke_williamson_1998}, eventually leading to reconnection and breakdown due to the cancellation of circulation. The engineering challenge lies in the fact that the initial isolated phase is remarkably robust. To accelerate the decay of the wake, it is desirable to identify intrinsic mechanisms that increase the growth of disturbances during the isolated phase, thereby allowing the vortices to reach sizes sufficient to trigger mutual interactions earlier.

To induce instability in this robust regime, specific spectral conditions must be satisfied. In the inviscid limit, the eigenmodes of a stable isolated vortex are strictly neutral, with eigenvalues confined to the imaginary axis. Consequently, the transition to instability requires a pair of neutral eigenvalues to collide, creating a geometric \textit{degeneracy} that allows them to bifurcate into growing and decaying modes, while preserving the underlying temporal symmetry. \textit{Triadic resonance}, a universal phenomenon occurring in diverse dispersive media including plasma physics, nonlinear optics, and fluid mechanics \citep{weiland_1977, craik_1986}, provides the simplest nonlinear coupling required to unfold this spectral degeneracy. In many systems, these interactions obey a set of conservation laws known as the Manley--Rowe relations \citep{Manley_Rowe_1956}, which constrain the exchange of wave action within each resonant triad. However, despite being well-established for many other systems, to the best of our knowledge, these relations have not been explored and analysed in the context of the stability of vortical flows. As discussed by \citet{Dodin_2008}, these relations can be derived deductively to offer a compact form of integrals for dynamical systems with resonances. Crucially, depending on the signs of the interaction coefficients (or pseudoenergies), the solution can either be bounded periodic exchange of wave activity or unbounded simultaneous growth to infinity. The latter solution is also known as \textit{explosive instability} \citep[see][]{craik_adam_1979, becker_1993}, which represents a possible pathway toward accelerating vortex breakdown without external forcing.

\subsection{Hydrodynamic selection rules and the optical analogy}
To elucidate the physics of wave resonance in vortical flows, we draw a theoretical connection to nonlinear optics. Both fluid vortices and optical media support dispersive waves that can interact nonlinearly. In optical three-wave mixing, energy transfer is efficient only when the phase matching condition ($\Delta\bm{k} = 0$) is satisfied \citep[see][]{boyd_2008}. Similarly, hydrodynamic resonance requires geometric matching of wavenumbers ($k,m$ for the axial and azimuthal wavenumbers) and temporal frequencies ($\omega$) among the triad members.

The distinction between the two systems lies in the energetic status of the background medium. In standard nonlinear optics involving a passive medium (such as a transparent crystal), the total energy of the interacting waves is conserved. Under these conditions, the Manley--Rowe relations constrain the system to a bounded manifold, resulting in oscillatory exchange between frequencies \citep[see][]{Pershan_1918}. In contrast, the rotating shear of the background vortex constitutes a vast energy reservoir. Superficially, this resembles an optical ``active medium'' (e.g., a laser gain medium) or a plasma, where the medium can pump energy into the waves to drive explosive growth.

However, a central result of this paper is that the availability of this reservoir does not guarantee instability. We establish a framework of hydrodynamic ``selection rules'' to govern the triadic interactions. Drawing a parallel to quantum mechanics, where selection rules dictate allowable atomic transitions, we demonstrate that for \textit{smooth neutral modes}, the Hamiltonian structure of the Euler equations imposes strict constraints that confine the interaction to the bounded solution branch. Despite the presence of free energy in the shear, these rules prevent the resonant triad from extracting it. Consequently, the breakdown of a columnar vortex requires a symmetry-breaking mechanism to either access the explosive solution branch or violate the conservative constraints entirely.

We identify two such pathways. The first is \textit{external forcing}, which functions analogously to an optical parametric oscillator where an external pump wave actively injects energy into the system. This triggers a generalized parametric instability. Unlike classical studies typically limited to resonant configurations dictated by geometric constraints, our framework admits arbitrary driving frequencies and imposes no restrictions on the pumping wavenumbers. The second is the \textit{critical layer}, a flow singularity where the phase velocity of the wave matches the local flow velocity \citep{jacquin_2006, dizes_2005_2}. The critical layer functions similarly to optical resonant layers found in plasma absorption \citep{stix_1992}, introducing non-Hermitian effects that allow the wave to extract energy from the background shear. This analogy frames our investigation: intrinsic wave-wave resonance among regular modes is energetically constrained and bounded (passive), whereas instability requires a mechanism to tap into energy reservoirs, either explicitly via external forcing or implicitly via a critical layer singularity.

\subsection{Overview}
The objective of this paper is to perform a multi-scale perturbation analysis of triadic resonance in generic columnar vortices and to establish the selection rules that distinguish between bounded and explosive solutions. The paper is organised as follows. In \S\ref{sec:formulation}, we formulate the disturbance equations using a poloidal-toroidal decomposition. In \S\ref{sec:multi-scale}, we derive the general amplitude equations for triadic resonance. We analyse the specific case of conservative resonance, showing that it obeys the Manley--Rowe relations and theoretically admits either bounded oscillatory exchange or simultaneous explosive growth. In \S\ref{sec:selection-rules}, we apply this framework to columnar vortices. We prove that triadic resonance among smooth neutral modes is always conservative. Using wave pseudoenergy within a large-$k$ WKBJ framework, we then demonstrate that these conservative triads are strictly confined to the bounded solution branch, effectively prohibiting intrinsic explosive instability for smooth modes. We then relax the isolation constraint and investigate parametric instability, employing non-degenerate perturbation theory to tune for arbitrary resonant triads. Finally, in \S\ref{sec:critical-layer}, we discuss the role of critical layer singularities as a necessary pathway to break the selection rules, introducing non-Hermitian effects that allow the wave to access the background shear energy.

\section{Base flow and disturbance equations}\label{sec:formulation}
The base flow of interest is an isolated axisymmetric columnar vortex. In a cylindrical coordinate system $(r,\phi,z)$ where the $z$-axis aligns with the vortex centreline, its velocity profile has the generic form 
\begin{equation}
    \vq = V_\phi(r)\uvec_\phi + V_z(r)\uvec_z,
    \label{eqn:col-vortex}
\end{equation}
where $V_\phi$ and $V_z$ are the azimuthal and axial velocity components, and the unit vectors along $r$, $\phi$ and $z$ directions are denoted as $\uvec_r$, $\uvec_\phi$ and $\uvec_z$. Accordingly, the angular velocity $\Omega(r)$ and axial vorticity $\zeta(r)$ are expressed as
\refstepcounter{equation}
$$
\Omega(r) = \frac{V_\phi}{r}, \quad
\zeta(r) = \frac{1}{r}\dv{(rV_\phi)}{r}.
\eqno{(\theequation{a,b})}\label{eqn:base_zeta}
$$
Here, we assume that any provided physical quantities are non-dimensionalised with respect to a velocity scale $V_*$ and a length scale $r_*$, with the premise that they are well-defined. For instance, in the case of the Rankine vortex, $r_*$ may be taken as the radius of the vorticity patch, and $V_*$ as the maximum azimuthal velocity.

In this study, we will assume that the Reynolds number is sufficiently high ($\Rey \gg 1$), as is typical for aeronautical applications \citep{dizes_2007}, so viscous diffusion is neglected. The columnar vortex, as defined by (\ref{eqn:col-vortex}), is an equilibrium solution to the incompressible Euler equations regardless of the profiles of $V_\phi(r)$ and $V_z(r)$. The multiple-scale perturbation analysis presented in the subsequent sections will be developed for arbitrary $\vq$ profiles. However, for illustrative purposes and numerical computations, we will refer to the specific case of the Lamb--Oseen vortex, given by
\begin{equation}
    \vq = \frac{1-e^{-r^2}}{r}\uvec_\phi.
\end{equation}



We consider a small but finite-amplitude disturbance velocity field, $\bm{u}$, superposed on the base flow (\ref{eqn:col-vortex}). The equations of motion for the total velocity field, $\bm{v} = \vq + \bm{u}$, are the incompressible Euler equations, here written in the rotation form:
\begin{subequations}\label{eqn:euler-eqns}
\begin{gather}
    \div{\bm{v}} = 0, \label{eqn:euler-eqns-a} \\
    \pdv{\bm{v}}{t} = \bm{v}\cross(\curl{\bm{v}}) - \grad\varphi. \label{eqn:euler-eqns-b}
\end{gather}
\end{subequations}
where $\varphi = (\bm{v} \cdot \bm{v}) / 2 + p / \rho$ with $p$ being the associated pressure field. 
We require the disturbance $\bm{u}$ to be analytic at $r = 0$ and to decay algebraically or exponentially as $r\rightarrow\infty$ to maintain finite kinetic energy. Detailed discussions on the physical and accurate attainment of these conditions can be found in \citet{matsushima_marcus_1997, lee_marcus, Lee2024}.

The disturbance equations are obtained by cancelling out the equilibrium of $\vq$ from (\ref{eqn:euler-eqns-b}), which results into a system of four equations with four primitive variables: three components of $\bm{u}$, and $\varphi$. Instead of dealing with the primitive form, we note that the incompressibility condition (\ref{eqn:euler-eqns-a}) reduces to $\div{\bm{u}} = 0$, so the disturbance velocity field can be decomposed into poloidal and toroidal components:
\begin{equation}
    \bm{u} = \curl{(\psi\uvec_z)} + \curl{\left(\curl{(\chi\uvec_z)}\right)},
    \label{eqn:pt-decomp}
\end{equation}
where $\psi$ and $\chi$ are the toroidal and poloidal streamfunctions. To avoid gauge arbitrariness in (\ref{eqn:pt-decomp}), we require
\begin{equation}
    \lim_{r\rightarrow\infty}\psi = \lim_{r\rightarrow\infty}\chi = 0,
\end{equation}
and define the following linear, invertible poloidal-toroidal projection operator:
\begin{equation}
    \mathbb{P}[\bm{u}]=\bm{U}\equiv (\psi,\,\chi)^T,
\end{equation}
where $\bm{U}$ will be referred to as the poloidal-toroidal vector henceforth. The equations of motion for the disturbance can then be expressed in terms of the disturbance vector $\bm{U}$:
\begin{equation}
    \frac{\partial \bm{U}}{\partial t} = \mathbb{N}[\bm{V}_0,\bm{U}] + \frac{1}{2}\mathbb{N}[\bm{U},\bm{U}],
    \label{eqn:pt-euler}
\end{equation}
where
\begin{equation}
\begin{gathered}
    \bm{V}_0 \equiv \mathbb{P}[\vq], \\
    \mathbb{N}[\bm{U}_1,\bm{U}_2] \equiv \mathbb{P}[\bm{u}_1\cross(\curl\bm{u}_2)-(\curl{\bm{u}_1})\cross\bm{u}_2],
\end{gathered}
\end{equation}
for arbitrary $\bm{U}_{j} = \mathbb{P}[\bm{u}_{j}]$ ($j=1,\,2$). As the poloidal-toroidal projection of any conservative vector field (i.e., gradient of a scalar potential) is identically zero, (\ref{eqn:pt-euler}) eliminates the pressure term entirely. Accordingly, the disturbance dynamics are cast in a compact operator form in terms of the poloidal-toroidal vector, while ensuring the velocity field remains solenoidal by construction.

The objective of this study is to investigate the weakly nonlinear stability of the columnar vortex (\ref{eqn:col-vortex}) with a focus on the triadic interactions between the linear wave modes. Our analysis is based on a multi-scale expansion of the disturbance equations (\ref{eqn:pt-euler}), which allows us to derive amplitude equations governing the evolution of the wave modes and to identify the conditions under which these interactions lead to instability.

\section{Multiple-scale perturbation analysis}\label{sec:multi-scale}
This section presents the general framework of multi-scale analysis for the system described by (\ref{eqn:pt-euler}). Its implications for triadic resonance in columnar vortices will be discussed in the next section.

The analysis commences by constructing an asymptotic expansion of the disturbance vector $\bm{U}$. For a small disturbance, since the nonlinearity in (\ref{eqn:pt-euler}) arises solely from the quadratic term $\mathbb{N}[\bm{U},\bm{U}]/2$, it can be expanded in powers of a scale parameter, $\epsilon \ll 1$:
\begin{equation}
    \bm{U} = \sum_{n=1}^{\infty}\epsilon^n
    \bm{U}^{(n)}(r,\phi,z,t) ,
    \label{eqn:asymptotic-expansion}
\end{equation}
where nonlinear interactions manifest at $\order{\epsilon^2}$ or higher, and the leading-order component $\bm{U}^{(1)}$ resembles linear waves. Because the base flow $\vq$ has only radial variation, the linear solutions of (\ref{eqn:pt-euler}), denoted $\bm{R}$, can be expressed as Fourier series along the azimuthal and axial directions:
\begin{equation}
    \bm{R}(r,\phi,z) = \Tilde{\bm{R}}(r)e^{\iu(m\phi+kz)+\sigma t},
    \label{eqn:kwave}
\end{equation}
where $m$ and $k$ are the azimuthal and axial wavenumbers, and $\sigma$ and $\Tilde{\bm{R}}$ are the corresponding complex wave frequencies and wave functions respectively. To distinguish between the linear evolution and the inherent oscillatory nature of each linear wave, the complex wave frequency is separated into the real growth rate $\lambda$ and the oscillation frequency $\omega$:
\begin{equation}
    \sigma \equiv \lambda + \iu \omega.
\end{equation}
Consider a ``slow'' timescale that is much longer than the oscillation periods of the waves: 
\begin{equation}
    \tau \equiv \epsilon t,
\end{equation}
$\bm{U}^{(1)}$ can be constructed as a superposition of linear wave modes:
\begin{equation}
    \bm{U}^{(1)} \equiv \sum_{j} A_j(\tau) \Tilde{\bm{R}}_j(r)\mathrm{exp} \big[\iu(m_j\phi+k_j z+\omega_j t) \big] + \mathrm{c.c.},
    \label{eqn:multiple-scale}
\end{equation}
where the complex conjugate (of all precedent terms), denoted as $\mathrm{c.c.}$, is added to maintain a physical disturbance. 

In this multi-scale formulation, the modulations of the wave amplitudes $A_j$ not only reflect the linear growth or decay ($\lambda$) of individual modes but also account for their nonlinear interactions at $\order{\epsilon^2}$. In the following analysis, the amplitude equations governing the evolution of $A_j$'s will be derived, and their analytical solutions will be discussed.

\subsection{The linear waves}\label{sec:linear-problem}
The linear wave modes are determined by the leading-order problem. Substituting (\ref{eqn:kwave}) into (\ref{eqn:pt-euler}) and retaining only $\order{\epsilon}$-terms yields an eigenvalue problem for the linear operator $\mathbb{M}_{mk}$, defined as:
\begin{equation}
    \sigma \Tilde{\bm{R}} = \mathbb{N}_{mk}[\bm{V}_0,\Tilde{\bm{R}}] \equiv \mathbb{M}_{mk}\Tilde{\bm{R}},
    \label{eqn:evp}
\end{equation}
where $\sigma$ and $\Tilde{\bm{R}}$ are the corresponding eigenvalue and eigenvector, and the subscripts denote the wavenumbers associated with the linear operator. 
Similarly, the linear matrix operator $\mathbb{M}_{mk}$ admits a left-hand eigenvector $\Tilde{\bm{L}}^{mk}_j$ that satisfies 
\begin{equation}
    \sigma^{mk}_j (\Tilde{\bm{L}}^{mk}_j)^\mathrm{H} = (\Tilde{\bm{L}}^{mk}_j)^\mathrm{H}\mathbb{M}_{mk},
\end{equation}
which can be normalised to attain biorthogonality with the right-hand eigenvector:
\begin{equation}
    \left \langle \Tilde{\bm{L}}^{mk}_j \middle| \Tilde{\bm{R}}^{mk}_l \right \rangle = \delta_{jl},
    \label{eqn:evp-ortho}
\end{equation}
where the inner product is defined as $\left \langle \bm{X} \middle| \bm{Y} \right \rangle \equiv \int_0^\infty \left( \bm{X} \cdot \bm{Y} \right) rdr$.

It is useful to highlight an important invariance property of the eigenvalue problem (\ref{eqn:evp}). In particular, while the physical stability properties of the vortex are intrinsic, the wave frequency $\omega$ (the imaginary part of the eigenvalue $\sigma$) is frame-dependent. In a reference frame translating with constant velocity $\bar{V}$ along the $z$-axis, the transformed linear eigenvalue problem reads:
\begin{equation}
\begin{aligned}
    \sigma' \Tilde{\bm{R}}' &= \mathbb{M}_{mk} \Tilde{\bm{R}'} + \mathbb{P}_{mk}\big[-\bar{V}\uvec_z\cross(\curl{\mathbb{P}^{-1}[\bm{R}']})\big] \\
    &= \mathbb{M}_{mk} \Tilde{\bm{R}'} +\iu k\bar{V} \Tilde{\bm{R}'},
\end{aligned}
\end{equation}
where $\sigma'$ and $\Tilde{\bm{R}}'$ are the eigenvalue and eigenvector in the moving frame.
This implies a Doppler shift in the wave frequency while the eigenvector structure remains invariant:
\begin{equation}
    \omega' = \omega + k\bar{V}, \quad \Tilde{\bm{R}'} = \Tilde{\bm{R}}.
    \label{eqn:translation2}
\end{equation}
An analogous invariance holds for a frame rotating with constant angular velocity about the vortex centreline (see Appendix~\ref{app:rotation}). 

While the spectral properties of (\ref{eqn:evp}) depend on the specific base flow profile, the time-reversibility of the underlying Euler equations implies that any eigenvalue $\sigma$ with a non-zero real part must have a conjugate counterpart with the opposite sign of the real part. Therefore, for an isolated vortex that models aircraft wake vortices or similar long-lived structures, all eigenvalues shall be close to purely imaginary, corresponding to neutral or near-neutral stability. If we restrict our attention to strictly neutral modes with $\lambda = 0$, it is evident that the transition to instability has a specific spectral requirement: a pair of neutral eigenvalues must move towards each other and collide along the imaginary axis, creating a geometric \textit{degeneracy} before they can be ``pulled'' off the axis into the unstable and decaying half-planes respectively. This mechanism, illustrated conceptually in figure~\ref{fig:concept}, acts as a necessary pathway towards instability. In the framework of weakly nonlinear theory, a simplest and most fundamental realization of this mechanism is \textit{triadic resonance}, which provides the nonlinear coupling required to unfold the degeneracy and potentially trigger instabilities.

\begin{figure} 
    \centering 
    \includegraphics[width=.4\textwidth]{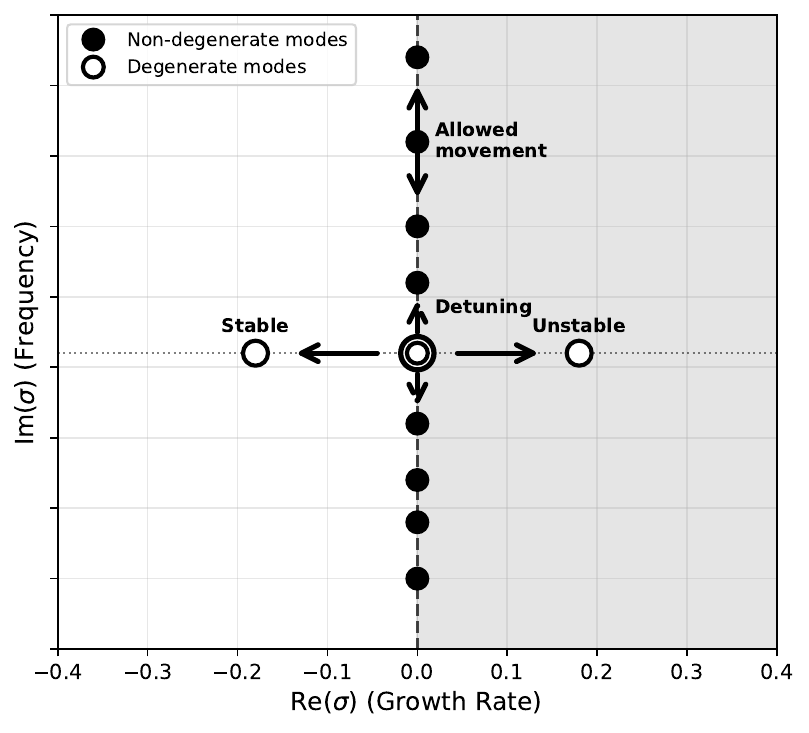}
    \caption{Stability mechanism via mode degeneracy in the inviscid limit. In the absence of degeneracy, discrete neutral modes (represented by the black circles) are topologically constrained to the imaginary axis (neutral stability) and are only permitted to move vertically along it as parameters vary. When a pair of modes becomes degenerate or nearly degenerate (represented by the white concentric circles), a resonant interaction can unfold in two distinct ways: either via detuning, where the modes separate along the imaginary axis, or via instability, where the modes are pulled symmetrically off the axis into the complex plane, resulting into a conjugate pair of growing and decaying modes.} 
    \label{fig:concept} 
\end{figure}

\subsection{Three-wave resonance}
When a disturbance field comprises a superposition of multiple linear eigenmodes, any pair of waves, denoted by indices $p$ and $q$, interacts via the quadratic nonlinearity $\mathbb{N}[\bm{U},\bm{U}]$ of (\ref{eqn:pt-euler}) and generates a reaction term characterized by the sum of their phases:
\begin{equation*}
    \exp\big[ \iu\big((m_p+m_q)\phi+(k_p+k_q) z+(\omega_p+\omega_q)t\big)\big].
\end{equation*}
If this nonlinear product effectively projects onto a third eigenmode, a resonant triad is established. This resonant coupling facilitates efficient energy exchange between the modes, potentially destabilizing a flow that is linearly stable to infinitesimal disturbance. Without loss of generality, we designate the triad members as $j = 0, 1$, and $2$, satisfying the resonant conditions described below.

The kinematic conditions required for sustained resonant interaction are:
\begin{equation}
    m_0 + m_1 = m_2, \quad k_0 + k_1 = k_2, \quad \omega_2 - (\omega_0 + \omega_1) = \epsilon\Delta\omega,
    \label{eqn:resonance}
\end{equation}
where the frequency detuning is assumed to be of $\order{\epsilon}$ or higher to prevent the interaction from averaging out over the timescales of amplitude evolution. Although the linear growth rates $\lambda_j$ do not explicitly appear in (\ref{eqn:resonance}), modes with significant damping ($|\lambda_j| \gg \epsilon$) effectively decouple from the triad. Meanwhile, modes that are already linearly unstable are of lesser interest in this weakly nonlinear context. In the rest of this section, we shall assume that all triad members are effectively neutral in the linear regime, with $|\lambda_j| \lesssim \mathcal{O}(\epsilon)$.

Equation (\ref{eqn:resonance}) can be understood as a degeneracy pulling mechanism. Specifically, consider a Galilean transformation to a reference frame moving at the same axial phase velocity as the primary wave (mode 0), $\bar{V} = -\omega_0/k_0$. In this frame, the primary wave appears as a stationary deformation ($\omega'_0 = 0$). The frequency matching requirement in (\ref{eqn:resonance}) reduces to \textit{degeneracy} or \textit{nearly-degeneracy} of the other two members:
\begin{equation}
    \omega'_2 \approx \omega'_1.
    \label{eqn:degeneracy}
\end{equation}
Physically, the deformation induced by mode $0$ breaks the symmetry of the background vortex and couples the previously orthogonal modes $1$ and $2$. This coupling allows the degenerate pair to interact, potentially causing their eigenvalues to pull off the neutral axis and leading to instability.

\subsubsection{General 3-wave amplitude equations}
To derive the equations governing the temporal evolution of the first-order wave amplitudes, we first approximate the second-order disturbance using the linear modes:
\begin{equation}
    \bm{U}^{(2)} = \sum_{m,k}\sum_{j} B^{mk}_j(\tau)\Tilde{\bm{R}}^{mk}_{j}\mathrm{exp}\big[ \iu(m\phi+k z+\omega^{mk}_j t)\big]  + \mathrm{c.c.},
\end{equation}
where slow modulations of $B^{mk}_j$'s are introduced following the method of multiple scales. 

Substituting the expansions for $\bm{U}^{(1)}$ and $\bm{U}^{(2)}$ into (\ref{eqn:pt-euler}) and imposing the solvability condition at $\mathcal{O}(\epsilon^2)$ yields the evolution equations for the triad amplitudes. Assuming $(m_j, k_j) \neq (0,0)$ to exclude self-interaction cases and retaining terms up to $\mathcal{O}(\epsilon^2)$, we obtain:
\begin{equation}
\begin{aligned}
    \dv{}{\tau}A_0 \Tilde{\bm{R}}_0 - \frac{\lambda_0}{\epsilon}A_0 \Tilde{\bm{R}}_0 - \sum_j \lambda^{m_0k_0}_jB^{m_0k_0}_j \Tilde{\bm{R}}^{m_0k_0}_j &= \mathbb{N}_{m_0k_0}[\Tilde{\bm{R}}_1^*,\Tilde{\bm{R}}_2]A_1^*A_2e^{\iu\Delta\omega \tau}, \\
    \dv{}{\tau}A_1 \Tilde{\bm{R}}_1 - \frac{\lambda_1}{\epsilon}A_1 \Tilde{\bm{R}}_1 - \sum_j \lambda^{m_1k_1}_jB^{m_1k_1}_j \Tilde{\bm{R}}^{m_1k_1}_j &= \mathbb{N}_{m_1k_1}[\Tilde{\bm{R}}_0^*,\Tilde{\bm{R}}_2]A_0^*A_2e^{\iu\Delta\omega \tau}, \\
    \dv{}{\tau}A_2 \Tilde{\bm{R}}_2 - \frac{\lambda_2}{\epsilon}A_2 \Tilde{\bm{R}}_2 - \sum_j \underset{\sim\order{\epsilon}\mbox{ or higher}}{\underbrace{{\lambda^{m_2k_2}_jB^{m_2k_2}_j \Tilde{\bm{R}}^{m_2k_2}_j}}} &= \mathbb{N}_{m_2k_2}[\Tilde{\bm{R}}_0,\Tilde{\bm{R}}_1]A_0A_1e^{-\iu\Delta\omega \tau}.
\end{aligned}
\label{eqn:3-wave-middle}
\end{equation}

Left-multiplying (\ref{eqn:3-wave-middle}) with the adjoint eigenvectors $\Tilde{\bm{L}}_j$ and applying the biorthogonality condition (\ref{eqn:evp-ortho}), we arrive at the canonical amplitude equations for an arbitrary resonant triad:
\begin{equation}
\begin{aligned}
    & \dv{}{\tau}A_0 -\frac{\lambda_0}{\epsilon}A_0 =  J_0 A_1^*A_2e^{\iu\Delta\omega \tau},\\
    & \dv{}{\tau}A_1 - \frac{\lambda_1}{\epsilon}A_1 = J_1 A_0^*A_2e^{\iu\Delta\omega \tau},\\
    & \dv{}{\tau}A_2 - \frac{\lambda_2}{\epsilon}A_2 = J_2 A_0A_1e^{-\iu\Delta\omega \tau},
\end{aligned}
\label{eqn:3-wave-general}
\end{equation}
with the nonlinear interaction coefficients defined as follows:
\begin{equation}        
\begin{aligned}
    J_0 \equiv \left \langle \Tilde{\bm{L}}_0 \middle| \mathbb{N}_{m_0k_0}[\Tilde{\bm{R}}_1^*,\Tilde{\bm{R}}_2] \right \rangle, \\ 
    J_1 \equiv \left \langle \Tilde{\bm{L}}_1 \middle| \mathbb{N}_{m_1k_1}[\Tilde{\bm{R}}_0^*,\Tilde{\bm{R}}_2] \right \rangle, \\ 
    J_2 \equiv \left \langle \Tilde{\bm{L}}_2 \middle| \mathbb{N}_{m_2k_2}[\Tilde{\bm{R}}_0,\Tilde{\bm{R}}_1] \right \rangle.
\end{aligned}
\label{eqn:interaction-coeff}
\end{equation}

When the wave functions of a resonant triad are known, its interaction coefficients can be calculated from (\ref{eqn:interaction-coeff}), and the numerical solutions of (\ref{eqn:3-wave-general}) can be easily computed. Analytical approaches, on the other hand, have thus far yielded only a limited number of qualitative findings for the general three-wave resonance, with no universally acknowledged general analytic solutions \citep[see, for example,][]{craik_1986}. In the following analysis, we will consider the case of conservative interactions where analytical insights of (\ref{eqn:3-wave-general}) can be obtained. 

\subsubsection{Conservative interactions and explosive resonance}\label{sec:conservative-triad}
Consider the scenario where a resonant triad satisfies the following hypothesis:
\begin{hyp}\label{hyp:conservative-condition}
    All three interaction coefficients of the resonant triad are purely imaginary: $J_0, J_1, J_2 \in \iu\mathbb{R}$.
\end{hyp}
In the ideal limit of exact resonance ($\Delta\omega=0$) between neutral modes ($\lambda = 0$), (\ref{eqn:3-wave-general}) simplify to
\begin{equation} 
    \dv{}{\tau}A_0 = J_0 A_1^*A_2,\quad
    \dv{}{\tau}A_1 = J_1 A_0^*A_2,\quad
    \dv{}{\tau}A_2 = J_2 A_0A_1 .
    \label{eqn:3-wave-conservative}
\end{equation}
Defining the wave intensity of each wave mode as
\begin{equation}
    E_j \equiv |A_j|^2 \in \mathbb{R}^+,
\end{equation}
and utilising the property $J_j^* = -J_j$ for purely imaginary coefficients, we recover the Manley--Rowe relations:
\begin{equation}
    \frac{d_\tau E_0}{J_0} = \frac{d_\tau E_1}{J_1} = -\frac{d_\tau E_2}{J_2} = \Gamma
    \label{eqn:manley-rowe}
\end{equation}
where $\Gamma(\tau) \equiv A_0^*A_1^*A_2 - A_0A_1A_2^*$. 
These relations imply the existence of conserved quantities (constants of motion) of the form
\begin{equation}
    C_0^+\equiv\frac{E_0}{J_0}+\frac{E_2}{J_2},\quad
    C_1^+\equiv\frac{E_1}{J_1}+\frac{E_2}{J_2},\quad
    C^-\equiv\frac{E_0}{J_0}-\frac{E_1}{J_1}.
\label{eqn:manley-rowe-constant}
\end{equation}
Therefore, resonant triads satisfying Hypothesis~\ref{hyp:conservative-condition} form a conservative dynamical system, where the exchange of wave intensities is strictly confined to a compact manifold defined by the constants of motion. It is worth noting that there exists a more general condition for conservative resonance (for $p,q = 0, 1, 2$):
\begin{equation*}
    \arg[J_p]-\arg[J_q] = n\pi, \quad n\in\mathbb{Z},
\end{equation*}
and an equivalent set of constants of motion can be derived. However, as will be elaborated in \S\ref{sec:selection-rules}, the triadic interactions of smooth neutral modes in columnar vortices inherently satisfy Hypothesis~\ref{hyp:conservative-condition}, which justifies our focus on this particular scenario.

\begin{figure}
    \centering
    \includegraphics[width=\textwidth]{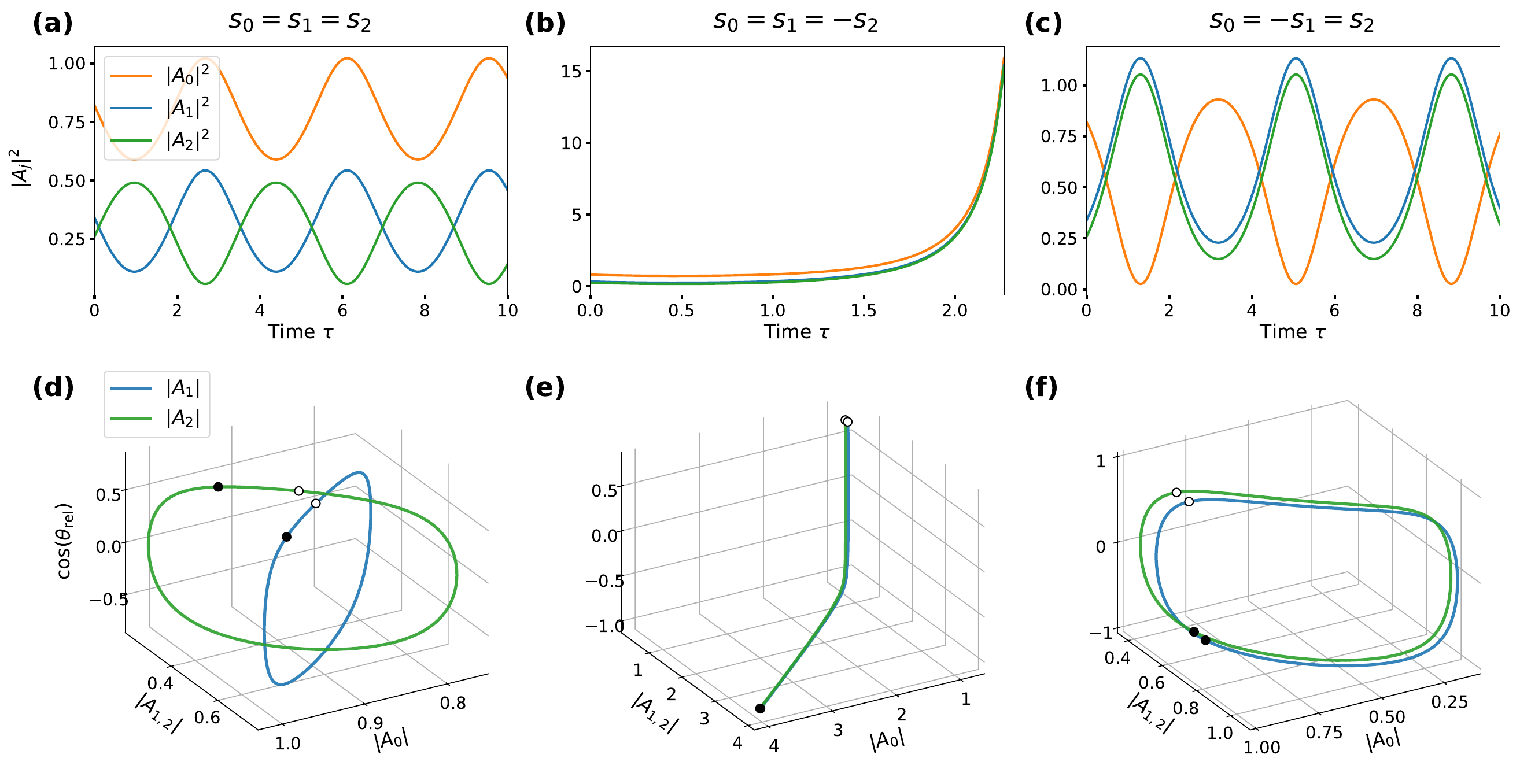}
    \caption{Example numerical solutions of the normalised conservative three-wave amplitude equations (\ref{eqn:3-wave-conservative-universal}) for distinct interaction coefficient sign signatures ($s_0$, $s_1$, $s_2$). (a--c) Temporal evolution of the normalised wave intensities $|\hat{A}_j|^2$. (d--f) Corresponding 3D phase space trajectories plotting $|\hat{A}_{1,2}|$ against $|\hat{A}_0|$ and the relative phase factor $\cos{(\phi_2-\phi_1-\phi_0)}$. White and black markers indicate the initial and final states of the evolution, respectively. Left and Right Columns: Bounded resonance regimes, where the system is strictly confined to closed periodic orbits on a compact manifold. Middle Column: The explosive resonance regime ($s_0=s_1=-s_2$), where the relative phase locks to a constant value while permitting the system to follow an open trajectory with simultaneous amplitude divergence. See also \citet{craik_1986}.}
    \label{fig:evolution}
\end{figure}

The analytical solutions of the amplitude equations (\ref{eqn:3-wave-conservative}) are expressible in terms of the elliptic functions \citep[see][]{weiland_1977} and consist of two distinct branches: bounded periodic oscillations or simultaneous explosive growth. The selection between these branches is determined solely by the relative signs of the interaction coefficients. To demonstrate this, we normalise (\ref{eqn:3-wave-conservative}) into a universal form:
\begin{equation}
    \dv{}{\tau}\hat{A}_0 = -s_0 \hat{A}_1^*\hat{A}_2,\quad
    \dv{}{\tau}\hat{A}_1 = -s_1 \hat{A}_0^*\hat{A}_2,\quad
    \dv{}{\tau}\hat{A}_2 = s_2 \hat{A}_0\hat{A}_1 ,
    \label{eqn:3-wave-conservative-universal}
\end{equation}
where the normalised amplitudes and sign indicators are defined as:
\begin{equation*}
    \left\{
    \begin{matrix}
    \hat{A}_j = K\frac{A_j}{\sqrt{|J_j|}}, \\
    s_j = -\iu J_j/|J_j|, \\
    \end{matrix}
    \right. 
\end{equation*}
with $K \equiv \iu\sqrt{|J_0||J_1||J_2|}$ and $s_j \in \{+1, -1\}$. The numerical solutions for (\ref{eqn:3-wave-conservative-universal}) can be easily obtained, and a series of example solutions for different combinations of $s_j$'s are given in figure~\ref{fig:evolution}. Following the derivation of the Manley--Rowe relations, we obtain the normalised constants of motion:
\begin{equation}
    \hat{C}_0^+\equiv s_0\hat{E}_0+s_2\hat{E}_2,\quad
    \hat{C}_1^+\equiv s_1\hat{E}_1+s_2\hat{E}_2,\quad       
    \hat{C}^-\equiv s_0\hat{E}_0-s_1\hat{E}_1,
    \label{eqn:manley-rowe-constant-normalised}
\end{equation}
where $\hat{E}_j = |\hat{A}_j|^2$ represents the normalised positive-definite wave intensity. The dynamics represented by (\ref{eqn:manley-rowe-constant-normalised}) have two topological regimes: 
\begin{enumerate}
    \item The first regime, \textit{explosive resonance}, occurs strictly when the coefficient of the sum mode (mode $2$, defined by $\omega_2 = \omega_0 + \omega_1$) opposes those of both constituent modes:
    \begin{equation}
        s_0 = s_1 = -s_2 .
        \label{eqn:sign-of-j}
    \end{equation}
    In this specific topology, the invariants $\hat{C}^+$ become differences of wave intensities (e.g., $\hat{E}_2 - \hat{E}_0 = \text{const}$), so the conservation laws place no upper limit on the system, allowing all three wave amplitudes to diverge simultaneously as $(t_c - t)^{-1}$ towards a finite-time singularity $t_c$, as shown in figure~\ref{fig:evolution}b.
    \item The second regime, \textit{bounded resonance}, encompasses all other sign combinations. Regardless of whether the signs are uniform ($s_0=s_1=s_2$) or mixed (e.g., $s_2 = s_0 = -s_1$), at least one of the invariants in (\ref{eqn:manley-rowe-constant-normalised}) serves as a sum of positive intensities, defining a bounding sphere or ellipsoid in phase space and forcing the system to undergo periodic amplitude oscillations without divergence, as illustrated in figures~\ref{fig:evolution}a and \ref{fig:evolution}c.
\end{enumerate}  
As indicated by figure~\ref{fig:evolution}d-f, the relative phase factor $\cos{(\phi_2-\phi_1-\phi_0)}$ remains constant for the explosive regime, indicating a phase locking mechanism that sustains the growth; in contrast, the bounded resonance cases exhibit periodic variations in the relative phase, which prevents the system from diverging and confines it to closed orbits in phase space.

Lastly, it is important to address the apparent paradox of explosive growth in a conservative system. The resolution lies in the distinction between the wave intensity $E_j$ tracked by the amplitude equations and the thermodynamic energy of the perturbation. In the presence of a background flow, a mode can extract energy from the mean flow, which further sustains the growth of wave intensities of the triad members while preserving the global energy balance of the fluid system. However, as noted by \citet{weiland_1977} and \citet{craik_1986}, this growth eventually outpaces the weakly nonlinear regime. At this threshold, higher-order effects — specifically third-order terms — may introduce frequency detuning or saturation that suppresses the singularity. Nevertheless, explosive triadic resonance remains a critical mechanism because it requires no sustained external forcing to drive the growth; instead, it provides a fast pathway for infinitesimal disturbances to grow rapidly, even if higher-order effects eventually saturate the interaction as the triad members reach large amplitudes.


\subsubsection{Parametric instability via external forcing}
In addition to the explosive instability, the conservative three-wave resonance can also manifest as a parametric instability when one mode is externally forced and serves as a pump to drive the growth of the other two modes. Without loss of generality, we assume $A_0$ as a frozen mode with constant amplitude. Under this assumption, the general amplitude evolution equations (\ref{eqn:3-wave-conservative}) reduce to a coupled linear system for the two free modes $(j = 1, 2)$:
\begin{equation}
    \dv{}{\tau}A_0 = 0,\quad 
    \dv{}{\tau}A_1 = J_1 A_0^*A_2,\quad
    \dv{}{\tau}A_2 = J_2 A_0A_1 .
    \label{eqn:3-wave-parametric}
\end{equation}
Differentiating with respect to $\tau$ decouples the system, yielding a second-order ordinary differential equation: 
\begin{equation}
    \dv[2]{}{\tau}A_j = |A_0|^2J_1J_2 A_j,
\end{equation}
which admit solutions of the form:
\begin{equation}
    A_j(\tau) = A_j^0\, e^{\pm\sqrt{J_1J_2}|A_0|\cdot\tau}.
    \label{eqn:3-wave-parametric-sol}
\end{equation}
Substituting (\ref{eqn:3-wave-parametric-sol}) back into (\ref{eqn:3-wave-parametric}) reveals the constant coupling between the free modes:
\begin{equation} 
    \frac{A_2}{A_1} = \pm\sqrt{\frac{J_2}{J_1}}\arg[A_0].
\end{equation}
Recall that $\tau$ is the slow timescale of $\order{\epsilon}$, so the total complex frequency (eigenvalue) for the coupled system is the sum of the original linear complex frequency $\sigma_j$ and the parametric correction:
\begin{equation}
    \sigma_{j,\text{par}} = \sigma_j \pm \epsilon \sqrt{J_1J_2}|A_0|,
    \label{eqn:corrected-parametric-sigma}
\end{equation}
which is either periodic when $\sgn[J_1]\cdot\sgn[J_2] > 0$ ($J_1$ and $J_2$ are both purely imaginary under Hypothesis~\ref{hyp:conservative-condition}), or exponential and implying instability when $\sgn[J_1]\cdot\sgn[J_2] < 0$. In the latter case, the conjugate corrections ($\pm$) exactly correspond to the eigenvalues being ``pulled'' symmetrically off the neutral axis while strictly preserving the time-reversibility of the underlying Hamiltonian system. 

\begin{figure}
    \centering
    \includegraphics[width=\textwidth]{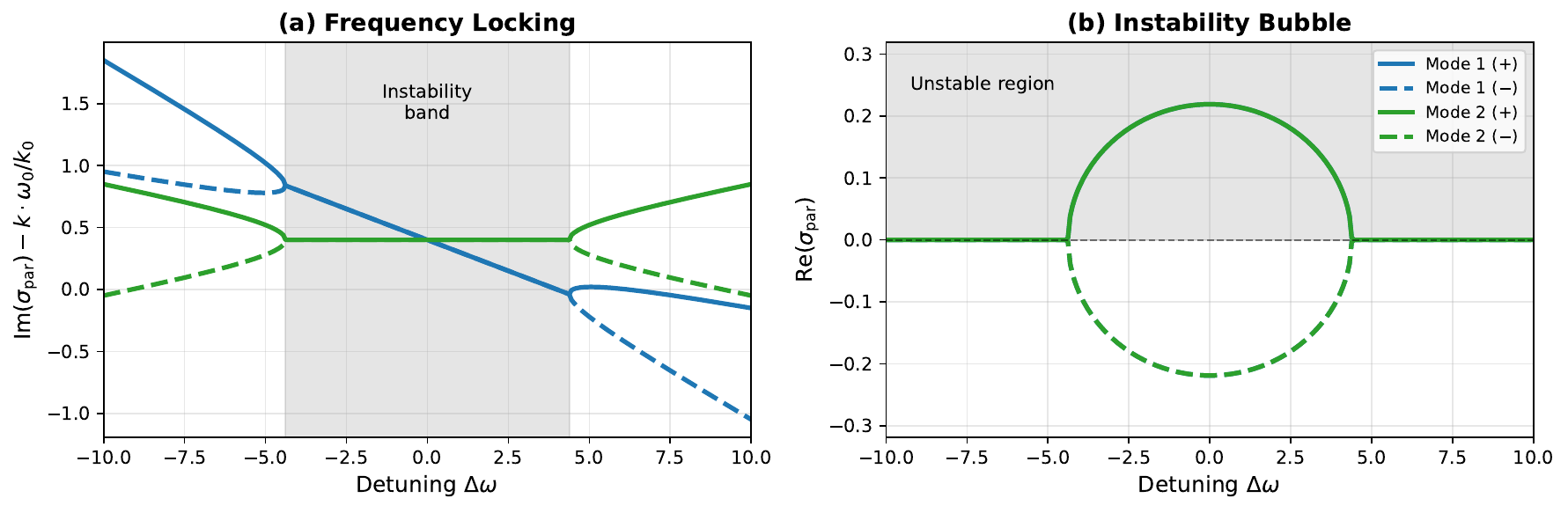}
    \caption{Effect of detuning $\Delta\omega$ on parametric instability as given in (\ref{eqn:sigma-detuned}). (a) Doppler-shifted eigenfrequencies $\mathrm{Im}(\sigma_{\text{par},j}) - k_j\omega_0/k_0$ for modes $j=1,2$. The solid and dashed lines represent the upper and lower branches of the four coupled eigenvalues, respectively. Inside the instability band (shaded region), the upper and lower branches of the same mode locks to the same wave frequency, and the eigenvalues merge toward exact resonance where the modes synchronize to a common Doppler-shifted frequency. (b) Real growth rates $\mathrm{Re}(\sigma)$. In the region where frequencies are locked, the growth rates bifurcate to form a symmetric instability bubble (shaded region indicating positive growth). The two wave modes share identical growth rates, leading to the appearance of a single curve for the unstable branch. Outside this region, the frequency mismatch becomes significant, and the coupled modes are no longer in near resonance.}
    \label{fig:detuning}
\end{figure}

In realistic flows, the resonance condition may not be perfectly satisfied. Our general three-wave resonance formulation allows a detuning of $\epsilon\Delta\omega = \omega_2 - (\omega_1 + \omega_0)$. Following the same decoupling procedure as the perfect resonance case, the reduced second-order differential equation for the free modes now includes a first-order term representing the phase drift. For the free mode with amplitude $A_2$, this is expressed as
\begin{equation}
    \dv[2]{}{\tau}A_2 + \iu \Delta\omega \dv{}{\tau}A_2 - |A_0|^2 J_1 J_2 A_2 = 0.
\end{equation}
Assuming a solution consistent with the slow-time expansion, the corrected complex wave frequencies are now derived as:
\begin{equation}
    \sigma_{j,\text{par}} = \sigma_j - \iu \frac{\epsilon \Delta\omega}{2} \pm \epsilon \sqrt{|A_0|^2 J_1 J_2 - \left(\frac{\Delta\omega}{2}\right)^2}.
    \label{eqn:sigma-detuned}
\end{equation}
Figure~\ref{fig:detuning} illustrates the influence of detuning on the parametric instability based on (\ref{eqn:sigma-detuned}). The presence of detuning introduces a threshold for instability, leading to a band of detuning values. Specifically, when the detuning is small enough such that $0 \leq (\Delta\omega/2)^2 < |A_0|^2 J_1 J_2$, the radical in (\ref{eqn:sigma-detuned}) becomes real, resulting in a pair of complex conjugate eigenvalues with peak growth rates reached at perfect resonance ($\Delta\omega=0$). Conversely, if the detuning is too large, the radical remains imaginary, and the instability does not exist. In \S\ref{sec:tuning}, a numerical method based on the non-degenerate perturbation theory will be introduced to locate and tune triads for minimal detuning, thereby maximizing the growth rates predicted by (\ref{eqn:sigma-detuned}).

Finally, we note that the frozen mode assumption is also valid for unforced triads but only in the regime where $|A_0|\gg |A_1|,|A_2|$. As the perturbation amplitudes eventually become comparable, the depletion effects become significant. In the absence of external forcing to replenish $A_0$, the system inevitably transitions to the general three-wave conservative interactions confined by the Manley--Rowe relations (\ref{eqn:manley-rowe-constant}).

\section{Implications for columnar vortices}\label{sec:selection-rules}
In this section, we apply the general framework of three-wave resonance to the specific context of columnar vortices. We begin by classifying the linear stability properties of columnar vortices, with a particular focus on the characteristics of neutral modes that are relevant for triadic interactions. Building on this classification, we derive a set of selection rules that govern the nonlinear coupling between these modes, which in turn determine the possible resonance scenarios and their associated dynamical behaviours.

\subsection{Linear stability properties}
The linear stability of columnar vortices has been the subject of extensive research \citep[e.g.,][]{mayer_powell_1992, jacquin_2006, lee_marcus}. Rather than reiterating the full breadth of these studies, our main objective here is to provide a clear classification of the neutral modes ($\sigma = \iu\omega$), which represent the oscillatory degrees of freedom available for the expansion in (\ref{eqn:multiple-scale}) and subsequent triadic resonance coupling. 

To facilitate this classification, we consider the reduction of (\ref{eqn:evp}) in terms of the primitive variables, which leads to the Howard--Gupta equation \citep{howard_gupta_1962} expressed here in terms of the radial velocity component $\Tilde{u}_r(r)$:
\begin{equation}
    \dv{}{r}\left(S\dv{(r \Tilde{u}_r)}{r}\right) - \left(\frac{a(r)}{\Phi}+\frac{b(r)}{\Phi^2}+r\right)\frac{\Tilde{u}_r}{r} = 0,
    \label{eqn:howard-grupta}
\end{equation}
where the auxiliary functions are defined as:
\begin{equation}
    \begin{aligned}
        \Phi(r) &= -\iu\sigma + m\Omega(r) + k V_z(r), \\
        S(r) &= \frac{r^2}{k^2r^2+m^2}, \\
        a(r) &= r \dv{}{r}\left[\frac{S}{r}\left(\dv{\Phi}{r}+\frac{2m}{r}\Omega\right)\right], \quad
        b(r) = 2kmS\Omega\left[\frac{1}{r}\dv{V_z}{r} - \frac{k}{m}\zeta\right].
    \end{aligned}
\end{equation}

For the neutral modes of interest, the coefficients in (\ref{eqn:howard-grupta}) are purely real. Therefore, when in the absence of any singularities, the eigenfunctions $\Tilde{u}_r(r)$ possess a global phase coherence. Meanwhile, the condition of mass conservation,
\begin{equation}
    \frac{1}{r}\dv{}{r} {(r\Tilde{u}_r)} + \frac{\iu m}{r}\Tilde{u}_\phi + \iu k \Tilde{u}_z = 0,
    \label{eqn:continuity_modal}
\end{equation}
imposes a quadrature relation ($90^\circ$ phase shift) between the phase of $\Tilde{u}_r$ and the phases of the azimuthal and axial velocity components. Without loss of generality, we apply a global phase shift to all neutral, non-singular solutions in our subsequent analysis such that the radial velocity is purely imaginary ($\Tilde{u}_r \in \iu\mathbb{R}$), rendering the tangential components real ($\Tilde{u}_\phi, \Tilde{u}_z \in \mathbb{R}$).

The solutions to (\ref{eqn:howard-grupta}) can carry critical layer singularities, which occur at locations where the Doppler-shifted frequency $\Phi(r)$ vanishes. The nature of the eigenfunctions and their associated dispersion relations are fundamentally influenced by whether such critical layers exist, which leads to two general wave categories:
\begin{enumerate}
    \item \textit{Regular modes} (or Kelvin waves), which exist when $\Phi(r) \neq 0$ for all $r$. In this case, the Howard--Gupta equation remains regular, yielding eigenfunctions that are analytic and possess the phase coherence described above. Their dispersion relations, $D(\omega, m, k) = 0$, are discrete and well-approximated by the large-$k$ asymptotic theory \citep{dizes_2005_2}.
    
    \item \textit{Critical layer modes}, which arise when $\Phi(r_c) = 0$ at some critical radius $r_c$ and form a \textit{continuous spectrum} in the dispersion plot. At the critical radius, the equation is singular, typically resulting in a logarithmic singularity and a phase jump that breaks the global phase coherence \citep{gallay_smets_2020}.
\end{enumerate}

Notably, within the continuous spectrum, there exists a discrete set of \textit{passive} critical layer modes that maintain structural regularity despite the presence of critical layer singularity. These modes can be interpreted as the continuation of discrete dispersion branches into the continuous spectrum \citep{jacquin_2006}, whose dispersion relations can still be approximated within the large-$k$ WKBJ framework developed by \citet{dizes_2005_2}. 

\begin{figure}
    \centering
    \includegraphics[width=\linewidth]{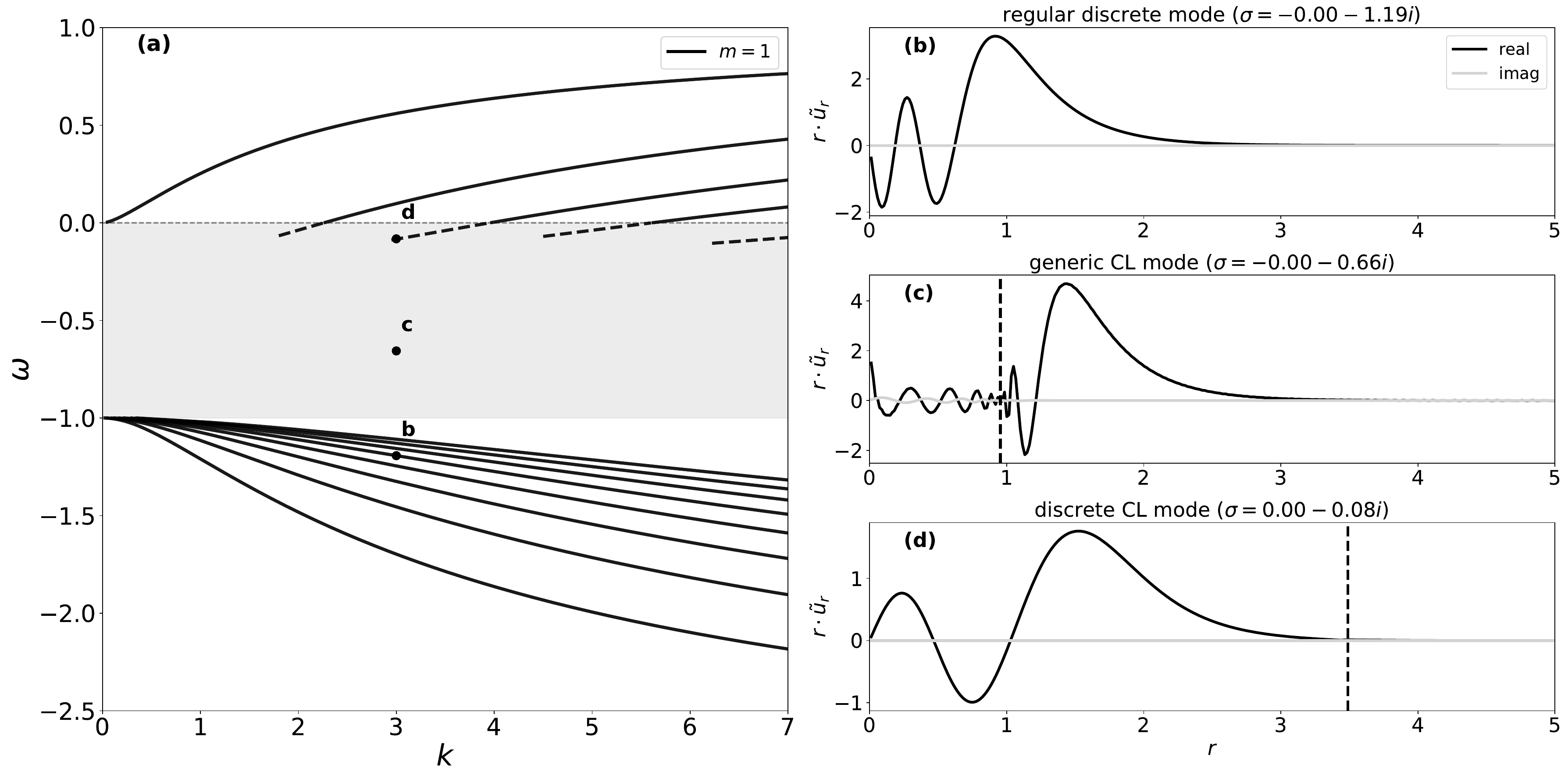}
    \caption{Characteristics of the inviscid linear modes of the Lamb--Oseen vortex obtained numerically by MLEGS \citep{MLEGS}. (a) Dispersion curves $\omega(k)$ of helical modes ($m=1$). Solid lines denote Kelvin waves, and dashed lines indicate neutral discrete modes possessing critical layers. The shaded frequency band $\omega_\text{CL} = [-m,0)$ is associated with the critical layer singularity and corresponds to the continuous spectrum. (b--d) Radial velocity component $r\tilde{u}_r(r)$ of selected eigenvectors ($m=1$, $k=3.0$): (b) A regular discrete Kelvin wave, exhibiting a purely real, smooth profile characteristic of global neutral modes; (c) A generic singular critical layer mode from the continuous spectrum. The singularity at the critical radius $r_c$ (vertical dashed line) breaks analytic continuity, resulting in spectral broadening and a distinct phase jump in physical space; (d) A ``passive'' discrete critical layer mode. Unlike generic critical layer modes, this solution maintains high-order spectral convergence, confirming that the critical layer singularity is effectively suppressed and decoupled from the core dynamics. The eigenvalues associated with the mode profiles shown are marked as circles in the dispersion plot.}
    \label{fig:profiles}
\end{figure}

Figure~\ref{fig:profiles} illustrates the dispersion characteristics of the Lamb--Oseen vortex, highlighting regular Kelvin waves, the continuous spectrum, and the discrete critical layer modes. The numerical results presented here and in subsequent sections are obtained using a spectral collocation method based on the mapped associated Legendre functions (MLEGS), which is capable of accurately resolving both regular and singular modes \citep{matsushima_marcus_1997, lee_marcus, MLEGS} and is detailed in Appendix~\ref{appendix:mlegs}. The radial velocity profiles of representative modes are depicted in figure~\ref{fig:profiles}b--d. It can be seen that, for the discrete critical layer modes (figure~\ref{fig:profiles}d), the critical layer $r_c$ is located far from the region where the eigenmode is localised, and the singularity is effectively suppressed, resulting in a smooth eigenfunction in the inner region that is structurally indistinguishable from a regular mode. In contrast, the surrounding generic modes of the continuous spectrum (figure~\ref{fig:profiles}c) display distinct phase jumps at the critical radius and lack global phase coherence. 

For the purpose of our nonlinear analysis, we categorise the neutral modes based on structural analyticity rather than simply the existence of critical radius. In the subsequent analysis, we shall focus on the set of \textit{smooth neutral modes} who maintain analytic continuity and global phase coherence within their localised region. This group naturally includes all regular Kelvin waves and the discrete subset of critical layer modes.

\subsection{Conservative interactions of smooth neutral modes}
We now show that for columnar vortices, the interaction of \textit{smooth neutral modes} (as classified above) is inherently conservative. This ensures that the dynamics are governed strictly by the Manley--Rowe relations derived in the previous section.

Recall that due to mass conservation, the radial velocity component $\Tilde{u}_r$ is in quadrature with the azimuthal and axial components. Following the phase convention established earlier, let $x_j$'s be real functions of $r$ such that:
\begin{equation}
    \begin{pmatrix}
    \Tilde{u}_r \\ \Tilde{u}_\phi \\ \Tilde{u}_z
    \end{pmatrix}
    =
    \begin{pmatrix}
    \iu x_1 \\ x_2 \\ x_3
    \end{pmatrix}.
\end{equation}
The vorticity field $(\Tilde{\omega}_r,\Tilde{\omega}_\phi,\Tilde{\omega}_z)e^{\iu(m\phi+kz)+\sigma t}$ inherits this phase structure. Specifically:
\begin{equation}
    \begin{pmatrix} \Tilde{\omega}_r \\ \Tilde{\omega}_\phi \\ \Tilde{\omega}_z \end{pmatrix} 
    = 
    \begin{pmatrix} \frac{1}{r}\partial_\phi \Tilde{u}_z - \partial_z \Tilde{u}_\phi \\ \partial_z \Tilde{u}_r - \partial_r \Tilde{u}_z \\ \frac{1}{r}\partial_r(r\Tilde{u}_\phi) - \frac{1}{r}\partial_\phi \Tilde{u}_r \end{pmatrix}
    = 
    \begin{pmatrix} \iu(\frac{m}{r}u_z - k u_\phi) \\ -k u_r - \partial_r u_z \\ \frac{1}{r}\partial_r(r u_\phi) + \frac{m}{r}u_r \end{pmatrix}
    =
    \begin{pmatrix}
    \iu x_4 \\ x_5 \\ x_6
    \end{pmatrix}.
\end{equation}
Thus, both velocity and vorticity vectors have the structure $(\text{Imag}, \text{Real}, \text{Real})^T$.

The nonlinear interaction term $\mathbb{N}$ involves the cross product $\Tilde{\bm{u}} \times \Tilde{\bm{\omega}}$. The components of this product follow the pattern:
\begin{equation}
    \begin{pmatrix}
    \iu x_1 \\ x_2 \\ x_3
    \end{pmatrix} 
    \cross
    \begin{pmatrix}
    \iu x_4 \\ x_5 \\ x_6
    \end{pmatrix} 
    =
    \begin{pmatrix}
    x_7 \\ \iu x_8 \\ \iu x_9
    \end{pmatrix}.
\end{equation}
Therefore, the nonlinear forcing term has the structure $(\text{Real}, \text{Imag}, \text{Imag})^T$. Finally, to compute the interaction coefficient $J$, we take the inner product with the adjoint $\Tilde{\bm{L}}$ in the primitive form. Since the adjoint eigenvectors for smooth neutral modes share the same phase structure as the direct modes ($\Tilde{L}_r \in \iu\mathbb{R}, \Tilde{L}_{\phi,z} \in \mathbb{R}$), the inner product yields:
\begin{equation}
    J \sim \int (\Tilde{L}_r^* N_r + \Tilde{L}_\phi^* N_\phi + \Tilde{L}_z^* N_z) r dr \sim (-\iu)(\text{Re}) + (\text{Re})(\iu) + (\text{Re})(\iu) \in \iu\mathbb{R}.
\end{equation}
This proves that the interaction coefficients are purely imaginary, as assumed in Hypothesis~\ref{hyp:conservative-condition}. Hence, resonant interactions involving smooth neutral modes in columnar vortices are strictly conservative, governed by the Manley--Rowe relations. This finding sets the stage for the subsequent analysis of whether such conservative triads can exhibit explosive instability or are restricted to bounded oscillations.

\subsection{Pseudoenergy and selection rules for conservative triads}
Given the potentially infinite space of possible resonant triads as evidenced by the continuous variation of axial wavenumbers, a brute-force verification of explosive instability through calculations of the nonlinear interaction coefficients is impractical. To address this, we follow \citet{cairns_1979}, who leveraged the concept of pseudoenergy from plasma physics and proposed that the onset of hydrodynamic instability can be determined by the relative signs of the triad members' pseudoenergy. The pseudoenergy can be defined to have the form\footnote{The negative sign in definition (\ref{eqn:pseudoenergy}) is introduced to ensure that regular Kelvin modes possess positive pseudoenergy. This accounts for the difference in temporal conventions: \citet{dizes_2008} define the wave phase as $\exp[\iu(kz + m\phi - \omega t)]$, whereas our formulation assumes $\exp[\iu(kz + m\phi + \omega t)]$. This sign reversal in the time domain necessitates the adjustment in the energy definition to maintain physical consistency with past literature.}:
\begin{equation}
    \mathcal{E}_j \equiv -\frac{1}{4}\omega_j\pdv{D}{\omega_j}|A_j|^2,
    \label{eqn:pseudoenergy}
\end{equation}
where $D(\omega,m,k)$ is the dispersion relation of the wave mode. 

For stratified shear flows characterized by piece-wise constant or linear velocity profiles with piece-wise constant or exponential density profiles, studies have identified the existence of explosive resonant triads \citep{craik_adam_1979, Tsutahara_1984, Tsutahara_1986}. These results are consistent with the criterion proposed by \citet{cairns_1979}, where instability is observed when the highest-frequency mode possesses a pseudoenergy sign opposite to that of the other triad members. In the case of unstratified parallel shear flows with smooth profiles, \citet{becker_1993} demonstrated that critical layer modes are the sole carriers of negative pseudoenergy, identifying them as prerequisites for explosive resonance.

For vortical flows, \citet{fukumoto_2005} employed the pseudoenergy to investigate the parametric instability of the Rankine vortex. They posited that a wave pair in resonance via external strain must possess pseudoenergies of either opposite signs or both zero to trigger instability, a criterion they corroborated via direct eigenvector evaluation. Furthermore, they derived the pseudoenergy for two-dimensional waves directly from the total fluid kinetic energy, providing a partial justification for the application of (\ref{eqn:pseudoenergy}) to vortical flows. Subsequently, \citet{dizes_2008} extended this framework to vortices with smooth velocity profiles, such as the Lamb--Oseen vortex. By evaluating (\ref{eqn:pseudoenergy}) in the inviscid regime via a WKBJ approach \citep[see][]{dizes_2005_2}, they demonstrated that for unstratified fluids, modes with negative pseudoenergy are confined to specific frequency bands associated with critical layer singularities. Hence, these modes are deemed essential components for the occurrence of the elliptical instability, a finding confirmed by subsequent numerical stability analyses \citep{dizes_2008,dizes_2017}. In the analysis below, we follow the large-$k$ WKBJ framework of \citet{dizes_2005_2} to investigate the pseudoenergy criterion and discuss the stability properties of the general three-wave resonance in columnar vortices.

We first note that, since the nonlinear interaction coefficients ($J_0$, $J_1$, and $J_2$) are solely calculated from the linear eigenvectors, they are invariant under the change of the viewing frame due to the Galilean invariance of the linear eigenvalue problem (refer to \S\ref{sec:linear-problem}). Hence, an explosive triad determined by (\ref{eqn:sign-of-j}) in the original frame remains so in the new frame, and vice versa. A direct application of this invariance is that, instead of evaluating the triad's pseudoenergy in the original frame, we can choose a specific reference frame in which a triad member appears stationary. In such a frame (denoted by primes), the wave frequencies of the now degenerate modes are exactly the same, and the pseudoenergy criterion used by \citet{fukumoto_2005} and \citet{dizes_2008} can be evaluated by the relative signs of the dispersion slopes:
\begin{equation}
    \sgn\left[J_p\cdot J_q\right] = \sgn\left[\pdv{D'}{\omega'_p}\cdot\pdv{D'}{\omega'_q}\right].
    \label{eqn:sign-of-j-dispersion}
\end{equation}

Regarding the signs of the dispersion slopes, we apply the WKBJ framework, which has been shown to provide good approximations of both the linear dispersion relations and the eigenvectors of columnar vortices \citep{dizes_2005_2, dizes_2008}. Specifically, in the large-$k$ limit, the $k^2$-proportional terms in the Howard--Gupta equation (\ref{eqn:howard-grupta}) dominates, and a local radial wavenumber can be defined as:
\begin{equation}
    k_r(r) = k\sqrt{\frac{\Delta(r)}{\Phi(r)^2}},
\end{equation}
which is real in regions where the generalized Rayleigh discriminant $\Delta(r)$ is positive:
\begin{equation}
    I \equiv \{r: \Delta(r) \equiv 2\zeta(r)\Omega(r)-\Phi(r)^2 > 0\}.
\end{equation}
The eigenvectors are thus oscillatory in these regions, while evanescent elsewhere. \citet{dizes_2005_2} demonstrated that the neutral modes are localised in $I$ and selected by a discretization condition that leads to quantized dispersion relations, whose derivative can be approximated as an integral over the localised region:
\begin{equation}
\begin{aligned}
    \pdv{D}{\omega} &= \pdv{}{\omega}\int_I \sqrt{\frac{\Delta}{\Phi^2}}dr \\
    &= -\int_I\frac{\Delta+\Phi^2}{\Delta^{\frac{1}{2}}\Phi^2}\sgn[\Phi]dr.
    \label{eqn:wkbj}
\end{aligned}
\end{equation}
The integrand in (\ref{eqn:wkbj}) shows that the sign of $\partial_\omega D$ depends solely on the sign of $\Phi$ in the oscillatory region. For the smooth neutral modes considered here, $\Phi(r)$ retains a uniform sign throughout $I$: for the regular modes, this is guaranteed by the absence of critical layers; for the discrete critical layer modes, the zero-crossing of $\Phi$ occurs outside the region of integration. Furthermore, since $\Phi$ represents the Doppler-shifted frequency relative to the local flow, which is frame-independent, the sign of $\partial_\omega D$ remains invariant regardless of the observing frame.


We now demonstrate that regular Kelvin waves cannot form explosive triads by using (\ref{eqn:wkbj}) and exploiting its frame invariance.
\begin{corollary}
Regular modes (Kelvin waves) of a columnar vortex alone cannot form explosive triads.
\end{corollary}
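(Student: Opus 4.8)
The plan is to argue by contradiction. Suppose three regular Kelvin modes, labelled so that they satisfy the exact resonance conditions (\ref{eqn:resonance}) with $\Delta\omega=0$, form a conservative explosive triad; I will derive a contradiction from the additivity of the Doppler-shifted frequency $\Phi$. First I would invoke the Manley--Rowe analysis that produced (\ref{eqn:manley-rowe-constant}): boundedness can fail only if \emph{both} $C_0^+$ and $C_1^+$ contain oppositely signed terms, which forces the pattern (\ref{eqn:sign-of-j}). Because $C_0^+$ and $C_1^+$ each pair $E_2/J_2$ with $E_0/J_0$ and with $E_1/J_1$ while $C^-$ pairs the latter two together, this pattern necessarily singles out the ``sum'' mode $j=2$ (the one with $\{m_2,k_2,\omega_2\}=\{m_0,k_0,\omega_0\}+\{m_1,k_1,\omega_1\}$): its interaction coefficient must be sign-opposite to those of the other two, which in turn agree with each other. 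Since all three members are regular --- hence neutral, non-singular, and, for small $m$ and large $k$, captured by the WKBJ dispersion relation of \citet{dizes_2005_2} --- each is a ``simple'' mode in the sense of \S\ref{sec:conservative}, so the established identity (\ref{eqn:sign-of-j-phi}) applies to every pair. Combined with (\ref{eqn:sign-of-j}) this yields $\sgn[\Phi_0]=\sgn[\Phi_1]=-\sgn[\Phi_2]$, where --- crucially --- each $\Phi_j$ keeps a single, frame-independent sign over the whole radial line because $\Phi_j(r)\neq 0$ for all $r$ for a non-singular mode and $\Phi_j$ is continuous with $\Phi_j(\infty)=\omega_j\neq 0$.

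The decisive step is then to observe that the three scalar resonance conditions (\ref{eqn:resonance}) force the pointwise identity $\Phi_2(r)=\omega_2+m_2\Omega(r)+k_2V_z(r)=(\omega_0+m_0\Omega+k_0V_z)+(\omega_1+m_1\Omega+k_1V_z)=\Phi_0(r)+\Phi_1(r)$ at every radius. If $\Phi_0$ and $\Phi_1$ share a common sign $s\in\{+,-\}$ throughout the domain, their sum $\Phi_2$ also has sign $s$ everywhere, contradicting $\sgn[\Phi_2]=-s$. Hence no conservative explosive triad can be assembled from regular modes alone --- the analogue of the role played by negative-pseudoenergy critical-layer modes in smooth shear flows \citep{becker_1993}. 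Phrased in energetic language: explosiveness demands a ``negative-action'' member of the triad, i.e. one whose $\partial_\omega D$ --- equivalently whose $\Phi$ --- is sign-opposite to the other two, but additivity of $\Phi$ makes this impossible among non-singular waves, whereas a critical-layer mode (for which $\Phi$ changes sign) can supply it.

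I expect the main obstacle to be one of scope and rigour rather than of combinatorial difficulty: the one-line contradiction above rests on (i) the claim that $\Phi_j$ is genuinely of one sign over all $r$ for a regular Kelvin wave, which I would justify from non-singularity together with $\Omega,V_z\to 0$ and $\Delta<0$ as $r\to\infty$, and (ii) the pairwise identity (\ref{eqn:sign-of-j-phi}), hence the whole conclusion, inheriting the large-$k$, small-$m$ asymptotic character of the WKBJ dispersion relation of \citet{dizes_2005_2,dizes_2008} --- which is precisely why the statement should be read as an argument against explosive Kelvin-wave triads rather than a parameter-free theorem. A secondary point to settle is modes with several localised regions (ring-plus-core and more complex structures): the additivity $\Phi_2=\Phi_0+\Phi_1$ still holds, but the sign of $\partial_\omega D$ must be reassembled from the roots of $\Delta$ on each sub-interval, so those configurations require the case-by-case treatment flagged after (\ref{eqn:sign-of-j-phi}) rather than the argument given here.
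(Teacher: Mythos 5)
Your argument is correct, but it is not the route the paper takes for this corollary. The paper's own proof never touches the resonance conditions: it chooses a translating--rotating frame in which $\min[m\Omega+kV_z]\leq 0\leq\max[m\Omega+kV_z]$ for every $(m,k)$, notes that a regular mode has $\Phi(r)$ of one sign for all $r$, deduces $\sgn[\omega]=\sgn[\Phi]$ in that frame, and concludes from (\ref{eqn:pseudoenergy}) and (\ref{eqn:wkbj}) that every regular Kelvin wave carries pseudoenergy of the same sign --- whence Cairns' requirement that an explosive triad contain oppositely signed pseudoenergies cannot be met. What you do instead is, in effect, the paper's proof of its \emph{second} corollary specialised to regular modes: you combine the Manley--Rowe sign pattern (\ref{eqn:sign-of-j}) with the pairwise relation (\ref{eqn:sign-of-j-phi}) and then exploit the pointwise additivity $\Phi_2(r)=\Phi_0(r)+\Phi_1(r)$ forced by (\ref{eqn:resonance}). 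The trade-offs are worth noting. Your version works entirely with the frame-invariant quantity $\Phi$ and so sidesteps the frame-dependence of the pseudoenergy sign, and it does not need Cairns' opposite-sign criterion as an imported black box (only the sign relation (\ref{eqn:sign-of-j-dispersion})--(\ref{eqn:sign-of-j-phi})); moreover, for regular modes the global single-signedness of each $\Phi_j$ makes the additivity step airtight, which is cleaner than the analogous step for critical-layer modes where $\Phi$ changes sign away from the localised region. The paper's version buys something different: it is a statement about the modes themselves (all regular modes are like-signed in pseudoenergy in a suitable frame) rather than about a particular triad, so it requires neither exact frequency matching nor the Manley--Rowe machinery. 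Both proofs inherit identically the large-$k$, small-$m$ WKBJ caveat and the restriction to simple (single-region) modal structures through $\sgn[\partial_\omega D]=-\sgn[\Phi]$, which you correctly flag. In short: correct, complete at the same level of rigour as the paper, but a genuinely different (and arguably more self-contained) argument than the one printed for this corollary.
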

\begin{proof}
    By setting proper azimuthal rotation speed and $z$-directional translation speed, we can always position the continuous spectrum of the base flow such that it straddles zero:
    \begin{equation*}
        \min[\Omega(r)]\leq 0 \leq \max[\Omega(r)]
        \quad \mbox{ and\ }\quad
        \min[V_z(r)]\leq 0 \leq \max[V_z(r)],
    \end{equation*}
    which implies that for any set of wavenumbers ($m$ and $k$), the local flow frequency range contains zero:
    \begin{equation}
        \min[m\Omega+kV_z]\leq 0 \leq \max[m\Omega+kV_z].
        \label{eqn:mOmegakV}
    \end{equation}
    Regular modes, by definition, possess eigenfrequencies $\omega$ that lie strictly outside the continuous spectrum (the range of the background fluid frequencies) to avoid critical layers. Therefore, in this frame, the Doppler-shifted frequency $\Phi(r)$ must satisfy:
    \begin{equation*}
        \Phi(r) < 0 \quad (\implies \omega < 0)
        \quad \mbox{or} \quad
        \Phi(r) > 0 \quad (\implies \omega > 0)
    \end{equation*}
    uniformly across the entire domain. This implies that the frame frequency $\omega$ and the intrinsic frequency $\Phi$ share the same sign for every regular mode.

    Recall that the pseudoenergy is proportional to $\omega \cdot \partial_\omega D$. From (\ref{eqn:wkbj}), we know that $\sgn[\partial_\omega D] = -\sgn[\Phi]$. Thus, all regular Kelvin waves possess positive pseudoenergy in this reference frame. Since the nature of an instability (bounded or explosive) must be frame-invariant, and we have found a frame where all pseudoenergies are positive, regular modes cannot form explosive triads.
\end{proof}

The prohibition of explosive instability extends to the broader class of smooth neutral modes. This generalization follows directly from the resonant structure of the Doppler-shifted frequencies. 
\begin{corollary}
    In the large-$k$ limit, explosive triadic resonance between smooth neutral modes is prohibited, except for the singular case where all three modes are static ($\omega_0 = \omega_1 = \omega_2 = 0$).
\end{corollary}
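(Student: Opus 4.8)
The plan is to argue by contradiction, combining the Manley--Rowe sign criterion for explosiveness with the WKBJ characterisation of $\partial_\omega D$ established in the preceding subsection. Suppose a conservative resonant triad of neutral simple modes is explosive and that not all of $\omega_0,\omega_1,\omega_2$ vanish. By the analysis leading to~(\ref{eqn:sign-of-j}), explosiveness forces exactly the sign pattern $\sgn[J_0]\,\sgn[J_1]>0$ together with $\sgn[J_0]\,\sgn[J_2]<0$; here the two ``parent'' waves $0$ and $1$ are interchangeable in the amplitude equations~(\ref{eqn:3-wave-conservative}), while the ``sum'' wave $2$ is the one singled out, because whether the conserved quantities $C_0^+,C_1^+,C^-$ of~(\ref{eqn:manley-rowe-constant}) bound all three energies is decided precisely by this configuration. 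It therefore suffices to show that $\sgn[J_0]\,\sgn[J_1]>0$ already implies $\sgn[J_0]\,\sgn[J_2]>0$.

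First I would apply the WKBJ relation~(\ref{eqn:sign-of-j-phi}) to the pair $(0,1)$: since the triad consists of neutral simple modes, $\sgn[J_0 J_1]=\sgn[\Phi_0]\,\sgn[\Phi_1]$, so the hypothesis gives $\sgn[\Phi_0]=\sgn[\Phi_1]$, each $\Phi_j$ being sign-definite on the mode's localised region $I_j$. The crucial step is then the pointwise identity that follows directly from the resonance conditions~(\ref{eqn:resonance}) with $\Delta\omega=0$: as functions of $r$,
\[
\Phi_2(r)=\omega_2+m_2\Omega(r)+k_2 V_z(r)=(\omega_0+\omega_1)+(m_0+m_1)\Omega(r)+(k_0+k_1)V_z(r)=\Phi_0(r)+\Phi_1(r).
\]
On the region where $\Phi_0$ and $\Phi_1$ share their common sign, the sum $\Phi_0+\Phi_1$ has that same sign, hence $\sgn[\Phi_2]=\sgn[\Phi_0]$; feeding the pair $(0,2)$ back into~(\ref{eqn:sign-of-j-phi}) gives $\sgn[J_0 J_2]=\sgn[\Phi_0]\,\sgn[\Phi_2]=+1>0$, which contradicts explosiveness. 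One concludes $\sgn[J_0]=\sgn[J_1]=\sgn[J_2]$, so every such triad is bounded.

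The main obstacle is not the algebra but the geometry of the localised regions: $(\ref{eqn:sign-of-j-phi})$ compares $\Phi_0$ on $I_0$ with $\Phi_1$ on $I_1$, whereas the identity above is used on $I_2$, so one needs $\Phi_0,\Phi_1,\Phi_2$ to be simultaneously sign-definite on a common neighbourhood. For regular Kelvin waves this is automatic --- $\Phi_j$ never vanishes, so its sign is global, and the statement collapses to Corollary~1 --- and for neutral singular simple modes one invokes that the critical layers lie far outside the overlapping regions that support a genuine resonant interaction, which is exactly the regime in which the asymptotics of \citet{dizes_2005_2,dizes_2008} are valid. The residual loophole, and the reason for the stated exception, is the fully static configuration $\omega_0=\omega_1=\omega_2=0$: there every pseudoenergy $\mathcal{E}_j$ vanishes, and the frame-shift device behind~(\ref{eqn:sign-of-j-dispersion})--(\ref{eqn:sign-of-j-phi}) --- which requires a frame in which two of the modes have equal \emph{nonzero} frequency --- cannot be set up, so the pseudoenergy criterion returns no information and explosiveness cannot be excluded by this method.
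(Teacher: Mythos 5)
Your proposal is correct and follows essentially the same route as the paper: explosiveness requires the sign pattern $\sgn[J_0]\sgn[J_1]>0$ with $\sgn[J_0]\sgn[J_2]<0$, and applying (\ref{eqn:sign-of-j-phi}) together with the pointwise identity $\Phi_2=\Phi_0+\Phi_1$ from the resonance conditions forces $\sgn[J_0]=\sgn[J_1]=\sgn[J_2]$, ruling out explosion. Your additional remarks on the need for the $\Phi_j$ to be simultaneously sign-definite on a common region, and on why the fully static case escapes the argument, are careful elaborations of points the paper leaves implicit rather than a different method.
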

\begin{proof}
    Explosive instability requires a triad whose interaction coefficient signs follow the relation (\ref{eqn:sign-of-j}). Assume, for the sake of contradiction, that a smooth triad exists with the first two modes sharing the same sign, $\sgn[J_0]\cdot\sgn[J_1]>0$. By (\ref{eqn:sign-of-j-dispersion}) and the property that $\sgn[\partial_\omega D] = -\sgn[\Phi]$ in the WKBJ approximation, this implies:
    \begin{equation}
        \sgn[\omega_0+m_0\Omega+k_0V_z] = \sgn[\omega_1+m_1\Omega+k_1V_z].
        \label{eqn:412}
    \end{equation}
    According to the resonance condition in (\ref{eqn:resonance}), the frequency of the third mode is the sum of the first two. Since $\Phi$ is linear in $\omega, m, k$, we obtain:
    \begin{equation}
    \begin{aligned}
        \sgn[\omega_2+m_2\Omega+k_2V_z] &= \sgn[(\omega_0+\omega_1)+(m_0+m_1)\Omega+(k_0+k_1)V_z] \\
        &= \sgn[(\omega_0+m_0\Omega+k_0V_z) + (\omega_1+m_1\Omega+k_1V_z)].
        \label{eqn:413}
    \end{aligned}
    \end{equation}
    Thus, $\sgn[\Phi_2] = \sgn[\Phi_0] = \sgn[\Phi_1]$. This indicates $\sgn[J_0]=\sgn[J_1]=\sgn[J_2]$, which corresponds strictly to bounded resonance.
\end{proof}

Overall, within the domain of smooth neutral modes in the large-$k$ limit, the only theoretical exception to the prohibition of explosive instability is the singular scenario identified by \citet{saffman_1975}, where all three interacting modes have zero frequency ($\omega = 0$). This presents a specific limiting problem outside the scope of the selection rules derived here. Additionally, \citet{dizes_2008} showed that, for the Lamb--Oseen vortex, the WKBJ theory provides good approximation of the dispersion relations up to $k = 0$. Thus, we conclude that isolated vortices, especially the monotonic profiles used in the aeronautics applications for modelling aircraft wake vortices, are robust against explosive triadic resonance between smooth neutral modes.

The selection rules, however, do not preclude instability if external or internal symmetry-breaking is introduced. Examples include parametric instability driven by external forcing, or the excitation of critical layer singularities where the phase uniformity required for the selection rules no longer hold.


\subsection{Parametric instability of columnar vortices}\label{sec:tuning}
Specific instances of the parametric instability have been studied extensively in the past, most notably the elliptical instability driven by an external strain field \citep{waleffe_1990, kerswell_2002, dizes_2005, dizes_2007}. These studies focus on the particular resonant configurations dictated by the flow geometry or confinement constraints. Our objective here is to establish a general methodology that admits pumping waves of arbitrary frequency and geometry, and identifies resonant triads without relying on the underlying symmetries of the system.

\begin{table}
\centering
\resizebox{\textwidth}{!}{%
\begin{tabular}{cccccc}
Instability                             & Vortex model         & Axial flow          & $\bar{m}$ & $(m_1,m_2)$         & Reference\\ 
\hline\hline
\multirow{8}{*}{Elliptical instability} & Circular vortex      & -                   & $2$       & $(-1,1)$            & \citet{saffman_1975} \\
                                        & Rankine vortex       & -                   & $2$       & $(-1,1)$            & \citet{tsai_widnall_1976} \\
                                        & Burger's vortex      & -                   & $2$       & $(-1,1)$            & \multirow{2}{*}{\citet{dizes_1999}}\\
                                        & Lamb--Oseen vortex    & -                   & $2$       & $(-1,1)$            &   \\
                                        & Gaussian vortex      & -                   & $2$       & $(-1,1)$            & \citet{dizes_2002} \\
                        & \multirow{2}{*}{Rankine vortex}      & constant axial flow & \multirow{2}{*}{$2$} & \multirow{2}{*}{$m_2-m_1 = 2$} & \multirow{2}{*}{\citet{dizes_2005}}  \\
                                        &                      & inside vortex core  &           &                     &   \\
                                        & Batchelor vortex     & jet-like axial flow & $2$       & $m_2-m_1 = 2$       & \citet{dizes_2007}\\
                                        & Moore-Saffman vortex & jet-like axial flow & $2$       & $m_2-m_1 = 2$       & \citet{feys_maslowe_2016}\\ 
\midrule
Multi-polar strain                      & Rankine vortex       & -                   & $2,3,4$   & $m_2-m_1 = \bar{m}$ & \citet{dizes_2001}\\ 
\midrule
\multirow{2}{*}{Curvature instability}  & \multirow{2}{*}{Batchelor vortex}          & \multirow{2}{*}{jet-like axial flow}                                     & \multirow{2}{*}{$1$} & \multirow{2}{*}{$m_2-m_1 = 1$} & \multirow{2}{*}{\begin{minipage}[t]{0.25\columnwidth}\centering\citet{dizes_2017}\end{minipage}}\\
          &                      &                                &                   &           &           \\ 
\midrule
Radial perturbation                     & Lamb--Oseen vortex    &  -                  & $0$       & -                 & \citet{maslowe_2022}\\
\bottomrule
\end{tabular}}
\caption{Summary of resonant triads examined in the literature. A blank axial flow entry indicates a vortex with no axial velocity component. In all listed cases, the perturbation modes have no axial dependency ($\bar{k}=0$) and the base wave eigenvalues are either zero or assumed zero ($\bar{\sigma}=0$), except for the Gaussian vortex case. Note that the radial perturbation at the bottom falls into the $(\bar{m},\bar{k}) = (0,0)$ self-interaction case that is not considered in this study.}
\label{tab:triads}
\end{table}

\begin{figure}
    \centering
    \includegraphics[width=\linewidth]{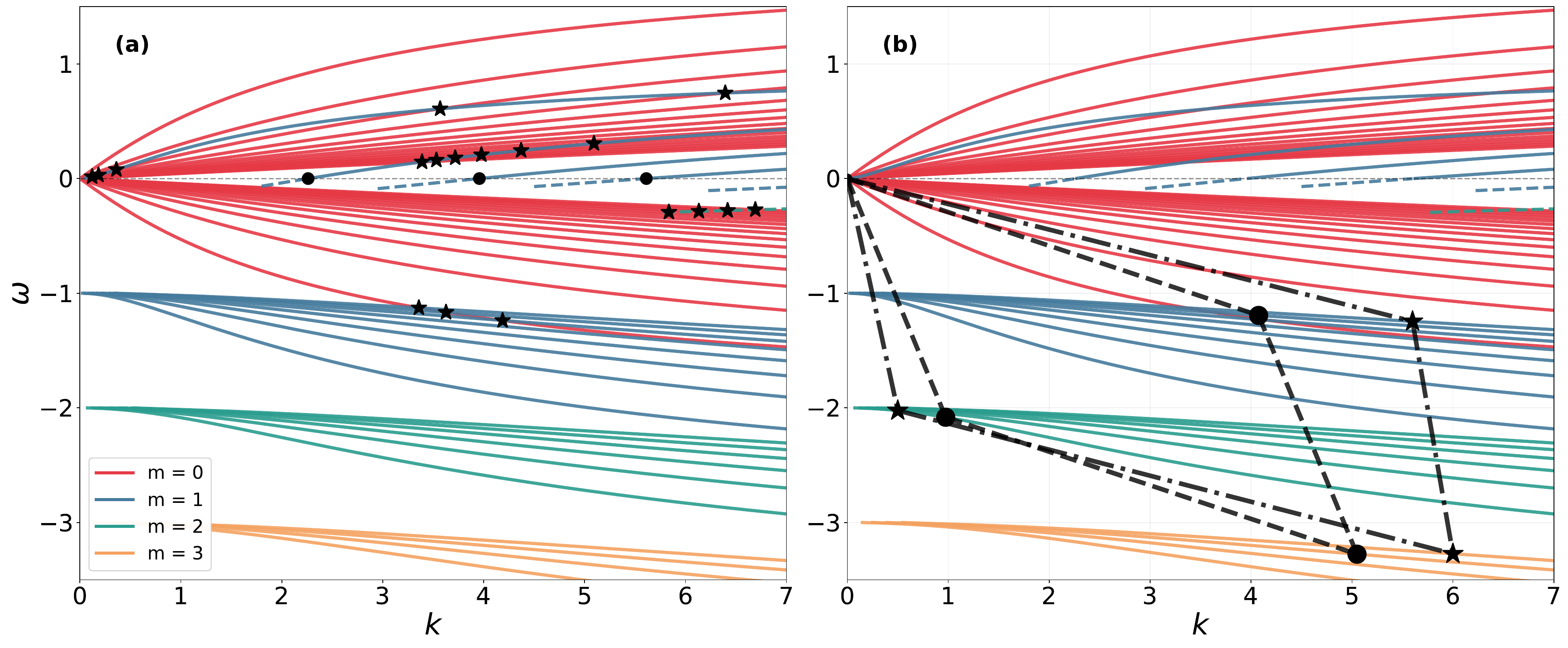}
    \caption{Graphical representations of resonant triads in the Lamb--Oseen vortex. (a) Resonance sustained by a stationary pumping wave with zero axial wavenumber ($m_p, k_p=0,\omega_p=0$). Star symbols mark crossing points between distinct dispersion branches ($k_i = k_j$, $\omega_i = \omega_j$) in resonance with $|m_p| = |m_i - m_j|$; filled circles indicate intersections with the zero-frequency axis, corresponding to a pair of conjugate modes ($\pm m$) in resonance with a static strain field with symmetry $m_p=2m$. (b) General triadic resonance satisfying the frequency matching condition $\omega_1 + \omega_2 = \omega_3$ alongside wavenumber matching. The specific examples represent resonance between modes $(m,k,\omega)$: $(1, 5.6, -1.2) + (2, 0.50, -2.0) = (3, 6.0, -3.3)$ (dash-dotted lines with star markers), and $(1, 4.1, -1.2) + (2, 0.98, -2.1) = (3, 5.1, -3.3)$ (dashed lines with circular markers).}
    \label{fig:dispersion}
\end{figure}

Table~\ref{tab:triads} summarises the resonant configurations examined in the literature. Classical phenomena such as the elliptical instability can be modelled as a stationary quadrupolar deformation ($m_0 = 2, \omega_0 = 0$) imposed by a counter-rotating vortex. Similarly, studies on the precessional instability often utilise the precessing reference frame to render the dipolar forcing stationary ($m_0 = 1, \omega_0 = 0$). In these scenarios, resonant triads can be located at discrete intersections of the dispersion curves, which were initially identified by \citet{saffman_1975} at the zero-crossings of the dispersion curves via the conjugate symmetry of the linear eigenvalues, as highlighted by solid round marks in figure~\ref{fig:dispersion}a for the Lamb-Oseen vortex, and later extended to the general cross points \citep{dizes_1999, Hussain_2001}, as indicated by solid stars in the same figure. On contrary, when the pumping parameters are treated as free variables, the task of identifying resonant triads is no longer trivial. Schematically, the resonant triads form a parallelogram in the dispersion space, as shown in figure~\ref{fig:dispersion}b: the vectors $(k_0,\omega_0)$ and $(k_1,\omega_1)$ define the adjacent sides, while $(k_2,\omega_2)$ completes the vertex. Since the vertices can move freely along the dispersion curves, a countless number of resonant triads are permitted. 

For the general case, if the dispersion relations are known \textit{a priori}, identifying resonant triads is geometrically straightforward: one can simply shift the origin of the dispersion curve of a constituent mode along the dispersion curve of the pump mode until an intersection with the dispersion curve of the remaining mode is identified. This shifting is equivalent to adjusting the relative axial wavenumbers of the modes. However, mapping the full dispersion landscape can be computationally expensive for complex base flow profiles. Furthermore, when a strict numerical tolerance for perfect resonance is required, it is more efficient to use non-degenerate perturbation theory to iteratively tune the axial wavenumbers of an arbitrary triad. Specifically, since the dispersion relation varies continuously with the axial wavenumber $k$, we perform a linear expansion of the frequency mismatch with respect to a correction $\Delta k$:
\begin{equation}
    \Delta\omega(k+\Delta k) \approx \Delta\omega(k) + \left( \pdv{\omega_2}{k} - \pdv{\omega_1}{k} \right)\Delta k = 0.
\end{equation}
By iteratively solving for $\Delta k$, we drive $\Delta\omega \to 0$, thereby locking the system into perfect resonance where the growth rate is maximized. 

\begin{figure}
    \centering
    \includegraphics[width=0.75\linewidth]{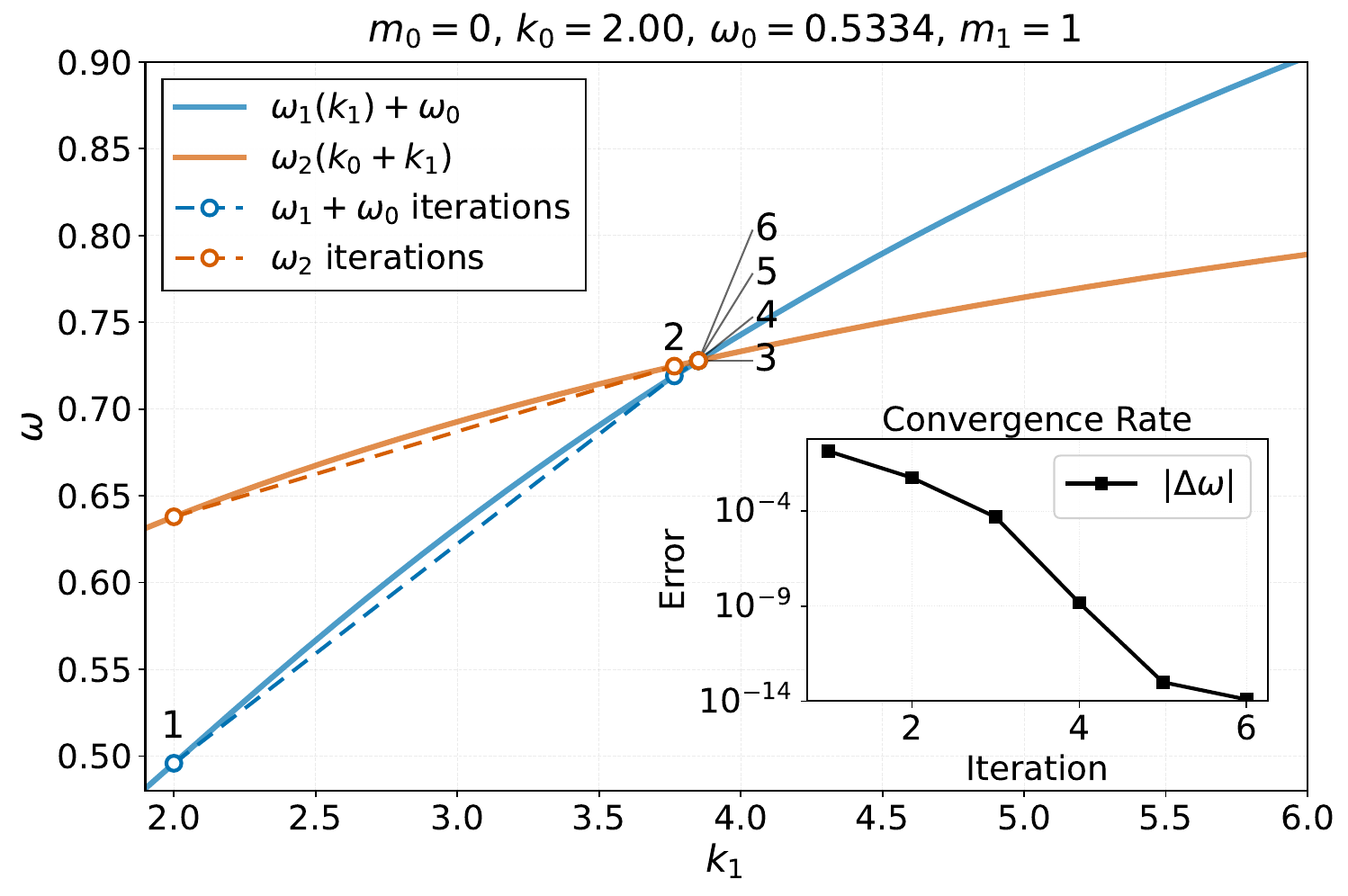}
    \caption{Iterations of the resonance tuning process for a triad driven by the pump mode $(m_0, k_0, \omega_0) = (0, 2.00, 0.5334)$ of the Lamb--Oseen vortex. Note that iterations $3$-$6$ are overlapping as the frequency mismatch approaches machine precision during the tuning. Solid lines are the dispersion curves of the two free modes, $\omega_0 + \omega_1(k_1)$ and $\omega_2(k_1+k_0)$. They are provided and shifted here for visual clarity so that their intersection corresponds to perfect resonance. Dashed lines trace the trajectories of the perturbation-based Newton--Raphson algorithm as it projects from one wavenumber approximation $k_1^{(n)}$ to the next $k_1^{(n+1)}$ based on the local group velocity mismatch given by (\ref{eqn:tuning}). The inset (lower right) quantifies the performance, showing the frequency mismatch $|\Delta \omega|$ reducing to a precision of $\order{10^{-1}}$ during the first iteration, and to $\order{10^{-14}}$ after 6 iterations.}   
    \label{fig:tuning}
\end{figure}

In the numerical implementation, we apply a finite test shift $\delta k$ to the axial wavenumbers of the free modes ($k_j'=k_j + \delta k$) and compute the resulting eigenvalue shift $\delta\sigma_j$ via the first-order correction from non-degenerate perturbation theory:
\begin{equation}
    \delta\sigma_j = \left\langle \Tilde{\bm{L}}_j\middle| \left(\mathbb{M}_{m_jk_j'}-\mathbb{M}_{m_jk_j}\right)\Tilde{\bm{R}}_j\right\rangle.
\end{equation}
The wavenumber correction required for perfect resonance is then estimated as:
\begin{equation}
    \Delta k \approx \delta k\cdot\frac{\sigma_1+\sigma_0-\sigma_2}{\delta\sigma_2-\delta\sigma_1} .
    \label{eqn:tuning}
\end{equation}
Figure~\ref{fig:tuning} illustrates this iterative tuning process, which constitutes a Newton--Raphson root-finding scheme capable of achieving arbitrary precision. As shown in the figure, the method is highly efficient as the frequency mismatch $|\Delta \omega|$ typically drops to $\order{10^{-3}}$ after a single iteration. Note that the shifted dispersion curves in figure~\ref{fig:tuning} are only for visualisation purposes, and the actual tuning process does not require the full mapping of the dispersion relations, hence suitable for arbitrary base flow profiles. Overall, this tuning capability generalizes the stability analysis, allowing us to locate parametric instabilities across a broad spectrum of pumping frequencies limited only by the existence of a valid triadic resonance in the parameter space.

While the number of resonant triads is effectively countless, resonance does not necessarily guarantee instability as discussed in \S\ref{sec:conservative-triad}. Relating the interaction coefficients in (\ref{eqn:corrected-parametric-sigma}) to the pseudoenergy signs examined in \S\ref{sec:selection-rules} shows that parametric instability necessitates a negative energy mode (specifically, a critical layer mode), except for the singular limit of zero-frequency modes. This requirement is satisfied by the mixed-sign subclass of the bounded regime identified in \S\ref{sec:conservative-triad}, where the sum mode shares an energy sign with the pump mode (i.e., $s_2 = s_0 = -s_1$). Although such configurations are topologically prohibited from forming explosive resonance, the external forcing of the pump wave effectively bypasses the Manley--Rowe constraints that would otherwise strictly limit the interaction amplitude. This energetic mechanism aligns exactly with the findings of \citet{dizes_2008}, who recognised the critical layer modes as a prerequisite for the elliptical instability. 

\begin{table}
\centering
\begin{tabular}{cccc}
\multicolumn{2}{c}{\citet{dizes_1999}} & \multicolumn{2}{c}{Present study}  \\ 
\cmidrule(lr){1-2}\cmidrule(lr){3-4}
$k$   & $\Re[\sigma_{\text{par}}]/\epsilon$                   & $k$       & $\Re[\sigma_{\text{par}}]/\epsilon$  \\
2.26 & 1.3790                        & 2.260814 & 1.379248       \\
3.94 & 1.3888                        & 3.957585 & 1.388915       \\
-  & -                               & 5.611789 & 1.391302       \\
-  & -                               & 7.252515 & 1.392252       
\end{tabular}
\caption{Comparison of the axial wavenumbers ($k$) and the parametric growth rates computed for the elliptical instability of the Lamb--Oseen vortex between a pair of helical ($m=\pm 1$) wave modes with equal $k$. These modes are the zero-crossings of the dispersion curves ($\omega=0$), highlighted as solid round marks in figure~\ref{fig:dispersion}a. Calculation of the stationary pump mode (the strain field correction) is detailed in Appendix~\ref{appendix:mlegs}, and the numerical values of the growth rate in this work are calculated based on (\ref{eqn:corrected-parametric-sigma}) using MLEGS. Note that \citet{dizes_1999} only computed the parametric growth rate for the first two degenerate pairs, and their results are scaled by $2$ here due to a scaling in the radial direction for the pump mode.}
\label{tab:elliptical}
\end{table}

As a demonstration, we perform a numerical investigation of the parametric instability in the Lamb--Oseen vortex. To confirm the numerical validity of our framework, we compare our numerical results based on (\ref{eqn:corrected-parametric-sigma}) against the classical elliptical instability results calculated by \citet{dizes_1999}. In the work by \citet{saffman_1975}, the resonant modes are located using the azimuthal symmetry of the linear dispersion relations, which suggests a pair of free wave modes being in resonance with the imposed strain mode when $m_2 = -m_1 = 1$ and $\omega_1 = \omega_2 = 0$. These modes can be found as the zero-roots of the linear dispersion relations, which are highlighted by solid round marks in figure~\ref{fig:dispersion}a. The external strain field associated with the elliptical deformation is obtained as a stationary mode ($m_0 = 2, k_0 = 0, \omega_0 = 0$), as detailed in Appendix~\ref{appendix:mlegs}. Table~\ref{tab:elliptical} lists the computed growth rates for various resonant pairs, which are in excellent agreement with the numerical results of \citet{dizes_1999} to four significant figures. We further extend the calculation to higher wavenumbers, which were not explored in the previous work, and find that the growth rate continues to increase with $k$ within our search range, albeit at a diminishing rate.

Moving beyond the classical elliptical instability, we look for general parametric instability of the Lamb--Oseen vortex. We note that the resonance condition allows for mode swapping: if a triad is found where the free modes are both of the opposite kind to the pump mode (e.g., a critical layer mode as the pump mode, and two regular modes as the free modes), which yields no parametric instability, one can swap the pump mode $m_0$ with free mode $m_1$ to create a new triad configuration where the free modes now have opposite pseudoenergies (e.g., regular and critical layer), thereby unlocking the parametric instability. In other words, as long as the pump mode and the first free mode are of opposite pseudoenergy signs, they always possess a parametric instability configuration regardless of the kind associated with the sum mode. 

\begin{figure}
    \centering
    \includegraphics[width=\linewidth]{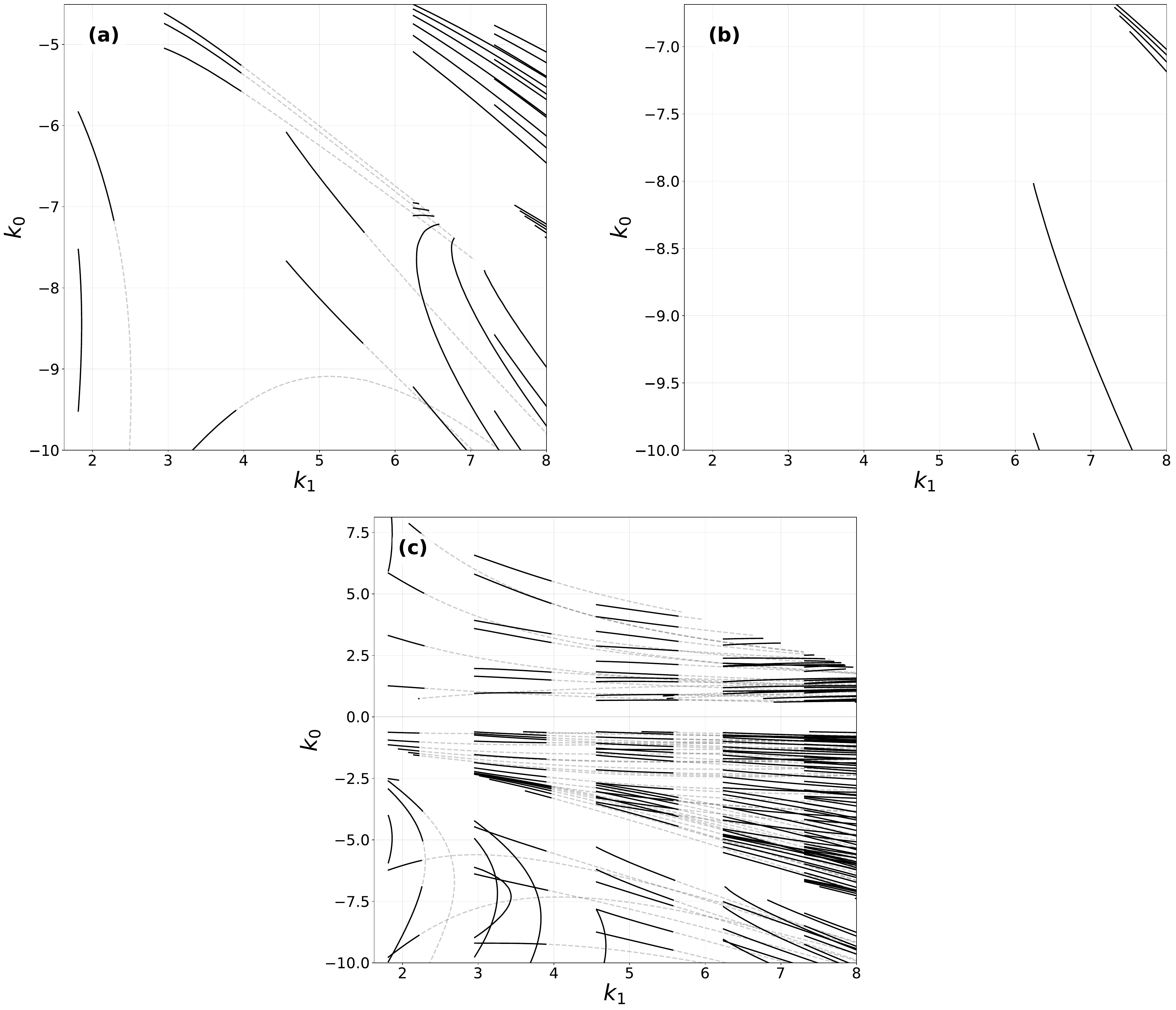}
    \caption{Resonance loci in the $(k_0, k_1)$ plane for triads involving a helical ($m_1=1$) mode and a pumping azimuthal wavenumber of (a) $m_0 = 1$; (b) $m_0 = 2$; (c) $m_0 = 0$. Solid line segments indicate regions capable of triggering parametric instability based on the pseudoenergy criterion, while the gray dashed continuations represent the remaining parts of the resonance loci where parametric instability is suppressed as the discrete critical layer modes transferred to regular Kelvin waves along the dispersion branch. Only loci containing at least one instability segment are plotted. The search is limited to the neutral branches as resolved in figure~\ref{fig:dispersion} for each azimuthal wavenumber, except for $m_0 = 0$ where we consider the first $6$ radial branches due to the dense spectral spacing.}
    \label{fig:cl_branches}
\end{figure}

Following the logic above, we perform a systematic search for parametric instability by considering all possible triads with the first two modes being of opposite pseudoenergy signs. Based on the pseudoenergy signs established earlier in the WKBJ framework, the search is effectively reduced to scanning the parameter space of the pump mode $m_0$ while keeping the first free mode $m_1$ fixed as a discrete critical layer mode. For demonstration purposes, we restrict our search to the discrete neutral branches as resolved in figure~\ref{fig:dispersion} for each azimuthal wavenumber, except for $m_0 = 0$ where we consider the first $6$ radial branches due to the dense spectral spacing between higher-order branches. Within the search range of $m\in[-3,3]$ and $0 \leq|k|\leq 10$, we are able to locate various parametric configurations between pumping azimuthal wavenumber of $m_0 = 0, 1, 2$ and discrete critical layer modes of azimuthal wavenumber $m_1 = 1$, which are highlighted in figure~\ref{fig:cl_branches} as solid line segments of the $(k_0,k_1)$ plane resonance loci. The parametric growth rates of these located configurations are computed numerically from their interaction coefficients based on (\ref{eqn:corrected-parametric-sigma}) and are all found to be real, confirming their instability indicated by the pseudoenergy criterion. 

It is worth pointing out that, because of the conjugate and azimuthal symmetries of the underlying eigenvalue problem, the existence of parametric configurations involving $m_1 = 1$ implies the existence of a symmetric set of $m_1 = -1$, and the corresponding configurations are not presented here to avoid redundancy. Additionally, although we are not able to locate any parametric configurations involving a critical layer mode of $m=2$ within the specified search range, we expect such configurations to exist at higher wavenumbers due to the limited span of their neutral branches at lower wavenumbers, which is evident in figure~\ref{fig:dispersion}. We also want to note that our numerical search is strictly restricted to smooth neutral modes, which are verified by monitoring their spectral expansion coefficients based on the spectral convergence guaranteed by the MLEGS basis functions for smooth profiles. This distinguishes our work from \citet{jacquin_2006, dizes_2005_2}, where dispersion curves of the discrete critical layer modes are extended to lower wavenumbers as the critical layer singularity moves closer to the vortex core and becomes more prominent. Nevertheless, while we do not claim to have exhausted the entire parameter space, the results, as indicated by the continuous resonance loci in figure~\ref{fig:cl_branches}, suggest that parametric instability is relatively common and effectively countless. Most importantly, we have demonstrated that, with the pseudoenergy criterion and the perturbation-based tuning method, we can efficiently locate parametric configurations for arbitrary flow profiles, pumping frequencies, and geometries.

\section{Roles of the critical layer in the instability of columnar vortices}\label{sec:critical-layer}

The selection rules derived in \S\ref{sec:selection-rules} establish that isolated columnar vortices are robust against explosive resonance when the interacting waves remain smooth and neutral. This constraint applies not only to regular Kelvin waves but also to the discrete subset of critical layer modes, provided their critical layers are located sufficiently far from the vortex core where the vorticity gradient is negligible; in such cases, the singularity is effectively passive, and the modes adhere to the conservative constraints. Consequently, the onset of intrinsic instability requires a mechanism to violate the strict conservation laws imposed by the Manley--Rowe relations. While the previous section quantified how external forcing can actively pump energy into the system to trigger parametric instability, we now turn to a qualitative discussion of the second pathway: the active critical layer. In contrast to the rigorous perturbation analysis employed earlier, the objective of this section is to outline the physical principles by which critical layer singularities break the Hermitian symmetry of the operator, thereby allowing disturbances to extract energy directly from the background shear and transforming the conservative system into an open, non-conservative one capable of intrinsic instability.

To understand the physical mechanism of the critical layer, we start by introducing the energy budget of the disturbance field. The time rate of change of the total disturbance kinetic energy, $KE$, is governed solely by the shear production term:
\begin{equation}
\begin{aligned}
    \dv{t} KE &= - \iiint_V \bm{u} \cdot (\bm{u} \cdot \grad \vq) dV \\
    &= -\iiint_V \left(u_r u_\phi\cdot r\dv{r} \Omega + u_r u_z\cdot \dv{r} V_z\right) dV,
\end{aligned}
\label{eqn:energy_balance}
\end{equation}
where the two terms in the integral represent the work done by the disturbance against the rotational shear ($r\dv{r} \Omega$) and the axial shear ($\dv{r} V_z$), respectively. This relation indicates that, without external forcing (which is the mechanism for the parametric instability), a disturbance must be able to tap into the energy of the background shear via wave-mean interactions in order to sustain its growth.

In this regard, the critical layer is an efficient pathway to the background shear. The nature of the critical layer is a local resonance between the wave mode and the background flow at the critical radius $r_c$, where the wave phase velocity matches the local flow velocity, suggesting a zero Doppler-shifted frequency: $\Phi(r_c) = 0$. In the linear theory governed by the Howard--Gupta equation (\ref{eqn:howard-grupta}), this wave-mean resonance manifests as a mathematical singularity, leading to a logarithmic divergence in the vorticity fluctuation and essentially separating the eigenmodes into disjointed pieces in the inviscid limit. In fact, numerical calculations of the Lamb--Oseen vortex reveal pairs of degenerate modes sharing the same critical radius, where one mode is localised inside the critical radius and the other outside \citep{lee_marcus}. As the linear eigenvalue problem is merely a mathematical idealisation, the singularity is not expected to persist in the physical system. And the manner in which these piecewise linear solutions are connected to form a global solution depends entirely on the chosen regularisation method that reintroduces the physics neglected by the leading-order inviscid approximation. In the general study of critical layers, both viscosity and nonlinearity are recognised as mechanisms capable of smoothing the singularity \citep{Stewartson_1977,warn_1976,warn_1978}. Their relative importance is dictated by the competition between viscous diffusion and nonlinear effects; while often studied as distinct processes, they are not mutually exclusive and may exert comparable influence during the transition from a viscously dominated regime to a highly nonlinear state \citep{benney_bergeron_1969}.

In our analysis, a diminishing-viscosity regularisation is implicitly adopted through the large-$k$ asymptotic framework of \citet{dizes_2005_2}. This approach assumes that an inviscid solution of the linear problem should be the limit of its viscous counterpart as the Reynolds number tends to infinity. Mathematically, this justifies deforming the integration contour into the complex plane to avoid the singularity \citep{lin_1955}, leading to a finite, complex residue that ensures causality and provides the necessary connection between the piecewise solutions. Notably, this process introduces a discrete phase jump across $r_c$ (typically $-\pi$) that persists even in the high-$\Rey$ limit and allows the critical layer mode to exert a torque and extract kinetic energy from the background shear. Within the concept of pseudoenergy, this extraction identifies the critical layer mode as a negative pseudoenergy mode. In a physical medium, the sign of a mode's pseudoenergy determines its role in the disturbance's total energy budget: a positive energy mode requires a net energy investment to grow, while a negative energy mode grows by effectively withdrawing energy from the supporting flow. Therefore, a disturbance consisting solely of positive energy modes cannot undergo simultaneous growth, as the amplification of one member would necessitate the decay of others to satisfy the total budget. The negative energy mode removes this constraint and can act as a source that enables simultaneous amplification, which explains why critical layer modes are essential for both the explosive resonance and the parametric instability that are analysed in \S\ref{sec:selection-rules}.

As the critical radius $r_c$ moves deeper into the mode's localised region, its role in the triadic exchange becomes increasingly dominant. Although the individual modes may remain neutral in an inviscid linear sense, the active critical layer facilitates a continuous energy exchange with the background shear that renders the triadic resonance \textit{non-conservative}. This is reflected in the interaction coefficients becoming complex ($J \in \mathbb{C}$), a result of the complex residues associated with the regularised critical layers. Since the resonance is no longer conservative, this shift liberates the system from the Manley--Rowe relations and the selection rules that otherwise forbid explosive resonance between smooth neutral modes, making simultaneous amplification of resonant modes topologically permissible.

In addition, at higher amplitudes, nonlinearity often becomes the dominant regularisation effect. This is particularly evident in the emergence of ``stacked'' resonances. Due to the linear relation $\Phi_2(r) = \Phi_0(r) + \Phi_1(r)$ governing a resonant triad, if any two members share a critical layer, the third mode is geometrically constrained to possess a critical layer at the precise same location. Such a configuration is likely to trigger a nonlinear cascade, as these primary modes can spawn higher-order harmonics or secondary interactions that remain spatially locked to the \textit{same} critical radius. In these instances, the interaction transcends the simple triadic resonance assumed in our weakly-nonlinear formulation. The presence of strong, localised nonlinearities violates standard asymptotic ordering, thereby necessitating a dedicated regularisation framework \citep{maslowe_1986}. While such nonlinear critical layer theory is well-established for planar geometries \citep{benney_bergeron_1969,Stewartson_1977,warn_1976,warn_1978}, the specific case of three-dimensional critical layers in cylindrical columnar vortices remains comparatively unexplored and warrants further investigation.

Thus far, we have focused on unstratified vortices; however, the concept of the critical layer as a wave-mean resonance extends naturally to stratified fluids, suggesting promising applications for wake vortex mitigation. Specifically, in the presence of a density gradient, resonance also occurs at \textit{baroclinic critical layers} where the Doppler-shifted frequency matches the buoyancy frequency (the Brunt--Väisälä frequency $N$):
\begin{equation}
    \Phi(r_c) = \pm N.
\end{equation}
At these locations, the resonance can tap into not only the kinetic shear but also the potential energy of the stratification. In rotating, stratified shear flows, these baroclinic critical layers are central to the Zombie Vortex Instability, which is characterised by self-replicating vortex generation \citep{Marcus_2013,Marcus_2015,Marcus_2016,Barranco_2018}, and the relevant nonlinear critical layer theory and secondary instabilities have been examined by \citet{Balmforth_2020, Balmforth_2021}. 

In the context of aeronautics, aircraft wake vortices are commonly modelled as barotropically stable vortices with monotonic angular velocity profiles, which decay slowly via viscous diffusion. However, these flows can be subject to radial thermal gradients arising from engine exhaust entrainment, water injection, or ground heating. Mathematically, such a gradient in a rotating flow induces a centrifugal buoyancy force, which defines a radial Brunt--Väisälä frequency $N_r$ and introduces a new spectrum of resonant frequencies that permit the formation of baroclinic critical layers. By artificially inducing a radial density gradient, for example via strategic heat injection at the wingtip, one could ``engineer'' baroclinic critical layers into an otherwise robust vortex. These singularities would introduce negative pseudoenergy modes into the spectrum and bypass the strict selection rules of conservative resonance that protect the vortices, potentially enabling resonant interactions that degrade the vortex structure much faster than viscous diffusion. Relevant work exploring these baroclinic interactions is currently underway.

\section{Conclusion}
In this work, we have unified the theory of vortex instability under the framework of hydrodynamic ``selection rules''. By performing a multi-scale perturbation analysis, we derived the general amplitude equations for triadic resonance in columnar vortices. We established that the resonant interaction between smooth neutral wave modes, specifically regular Kelvin waves and discrete critical layer modes with passive singularities, is inherently conservative. Consequently, the dynamics of these triads are governed by the Manley--Rowe relations, which constrain their energy exchange to compact manifolds.

To determine the stability of these conservative interactions, we employed the concept of wave pseudoenergy within a large-$k$ WKBJ framework by \citet{dizes_2005_2}. We demonstrated that the sign of pseudoenergy is invariant under a translating frame, and critical layer modes are the sole carriers of negative pseudoenergy. This leads to a set of selection rules that topologically prohibit the formation of explosive resonant triads. This finding provides a theoretical explanation for the robustness of isolated columnar vortices, showing that intrinsic instability is not a generic property of resonance but is suppressed by the underlying Hamiltonian structure of the flow.

Despite the prohibition of explosive resonance, we showed that the conservative framework admits \textit{parametric instability} when the system is driven by external forcing. In this scenario, the external pump acts as an energy reservoir that bypasses the Manley--Rowe constraints. To explore this, we developed a robust and efficient tuning method based on non-degenerate perturbation theory, capable of locating resonant configurations for arbitrary pumping frequencies and wavenumbers. We successfully validated this method on the Lamb--Oseen vortex, reproducing classical results for elliptical instability and identifying a broad class of new parametric instabilities involving critical layer modes.

We further identified \textit{active critical layers} as an alternative pathway to bypass the selection rules. Unlike smooth neutral modes, active critical layers introduce non-Hermitian effects that allow the disturbance to extract energy directly from the background shear, potentially transforming the conservative system into an explosive one. These findings suggest that the mitigation of aircraft wake vortices requires specific symmetry-breaking mechanisms: either tuned external forcing to trigger parametric instability or the engineering of critical layers (e.g., via thermal stratification) to activate these forbidden transitions. A study focusing on baroclinic critical layers in columnar vortices is currently underway.

Finally, while our analysis has focused on the inviscid regime, the broader framework represented by (\ref{eqn:3-wave-general}) can incorporate the influence of viscosity, provided that the leading-order mode structures align with their inviscid counterparts. Extensive discussions on conservative three-wave resonance involving linearly damped modes in plasma physics and hydrodynamics are available in \citet{weiland_1977} and \citet{craik_1986}, and a separate investigation into the case of columnar vortices is expected in the future. Moreover, the possibility of explosive triads persists in non-conservative scenarios. As critical layer modes can be regularised by both viscosity \citep{lee_marcus} and nonlinearity, the latter of which naturally exists in triadic resonance, we anticipate that they will hold significance in non-conservative resonance. A comprehensive numerical analysis of resonance involving regularised critical layer modes is currently being conducted.

\section*{Funding statements}
This research received no specific grant from any funding agency, commercial or not-for-profit sectors.

\section*{Competing Interests}
The author(s) declare none.

\section*{Data availability statement}
The solver used for analysis in this study is available as open-source software on Zenodo at https://doi.org/10.5281/zenodo.17095748 \citep{MLEGS}.

\appendix

\section{Linear system in a rotating frame}\label{app:rotation}\label{sec:galilean}
Consider an observing frame rotating with a constant angular velocity $\bar{\Omega}\uvec_z$, a velocity field will appear to have a superposed velocity $-r\bar{\Omega}\uvec_\phi$, and the differential operators in the rotating frame and the original frame are related as follows:
\begin{equation}
    \grad = \grad'
    \quad \mbox{and\ }\quad
    \pdv{}{t} = \pdv{}{t'}-\bar{\Omega}\pdv{}{\phi'},
\end{equation}
where $\{\cdot\}'$ denotes measurements and operators made with respect to the rotating frame. The Euler equations give
\begin{equation}
\begin{aligned}
    \pdv{\bm{v}'}{t}-\bar{\Omega}\pdv{\bm{v}'}{\phi} = \,&(\bm{v}'+r\bar{\Omega}\uvec_\phi)\cross\left(\curl{(\bm{v}'+r\bar{\Omega}\uvec_\phi})\right) - \grad\varphi \\
    =\,& \bm{v}'\cross\left(\curl{\bm{v}'}\right) + (r\bar{\Omega}\uvec_\phi)\cross(\curl{\bm{v}'}) \\
    &-\left(2\bar{\Omega}\uvec_z\cross\bm{v}'-\grad\bar{\Omega}^2r^2\right)- \grad\varphi.
    \label{eqn:euler_rot}
\end{aligned}
\end{equation}
It is useful to note that
\begin{equation}
\begin{aligned}
    \uvec_\phi\cross(\curl{\bm{v}'}) 
    &= \frac{1}{r}\grad(r v_\phi') - \frac{1}{r}\pdv{\bm{v}'}{\phi},
\end{aligned}
\end{equation}
so (\ref{eqn:euler_rot}) reduces to
\begin{equation}
\begin{aligned}
    \pdv{\bm{v}'}{t} =\,& \bm{v}'\cross\left(\curl{\bm{v}'}\right) + \grad(r\bar{\Omega}v_\phi'+\bar{\Omega}^2r^2-\varphi)-2\bar{\Omega}\uvec_z\cross\bm{v}'.
    \label{eqn:euler_rot_simp}
\end{aligned}
\end{equation}
Taking poloidal-toroidal projection of (\ref{eqn:euler_rot_simp}) gives
\begin{equation}
    \pdv{\bm{v}'}{t} = \mathbb{P}\left[\bm{v}'\cross\left(\curl{\bm{v}'}\right)-2\bar{\Omega}\uvec_z\cross\bm{v}'\right],
\end{equation}
so the transformed linear eigenvalue problem reads:
\begin{equation}
\begin{aligned}
    \sigma' \Tilde{\bm{R}}' =&\, \mathbb{M}_{mk} \Tilde{\bm{R}'} + \mathbb{P}_{mk}\big[-r\bar{\Omega}\uvec_\phi\cross(\curl{\mathbb{P}^{-1}[\bm{R}']})\\
    &+{\mathbb{P}^{-1}[\bm{R}']}\cross\left(\curl{(-r\bar{\Omega}\uvec_\phi)}\right)-2\bar{\Omega}\uvec_z\cross{\mathbb{P}^{-1}[\bm{R}']}\big] \\
    =&\, \mathbb{M}_{mk} \Tilde{\bm{R}'} + \mathbb{P}_{mk}\big[-r\bar{\Omega}\uvec_\phi\cross(\curl{\mathbb{P}^{-1}[\bm{R}']})\big]\\
    =&\, \mathbb{M}_{mk} \Tilde{\bm{R}'} +\iu m\bar{\Omega} \Tilde{\bm{R}'}, 
\end{aligned}
\end{equation}
which implies a shift in the wave frequency while the growth rate and the eigenvector are unchanged:
\begin{equation}
    \omega' = \omega + \iu m\bar{\Omega}
    \quad \mbox{and\ }\quad
    \Tilde{\bm{R}'} = \Tilde{\bm{R}}  .
    \label{eqn:translation1}
\end{equation}

\section{Numerical method}\label{appendix:mlegs}
The eigenvalue problem is computationally resolved using our previously developed spectral method for unbounded cylindrical domains, based on mapped associated Legendre polynomials (MLEGS) as detailed in \citet{matsushima_marcus_1997} and \citet{lee_marcus}. The solver is available as an open-source software on Zenodo \citep{MLEGS}. The foundation of this method is a tunable algebraic mapping between the semi-infinite physical domain $r \in [0,\infty)$ and the finite Legendre interval $[-1,1]$. This mapping constructs Galerkin basis functions that are regular at the origin and retain appropriate algebraic decay in the far field. It also allows for redistribution of the radial collocation points, enabling high resolution in desired regions such as near the vortex core or critical layers.

The eigenvalue problem (\ref{eqn:evp}) is formulated based on the poloidal-toroidal decomposition of the velocity field, as described in \S\ref{sec:multi-scale}. This formulation inherently satisfies the solenoidal condition (mass conservation). As a result, for a spectral truncation of $M$ polynomials, the generalized eigenvalue problem possesses a complexity of $2M$, representing a significant reduction compared to primitive variable formulations. Additionally, the MLEGS basis functions are designed to be Galerkin at the boundaries, which naturally enforce the regularity conditions at these boundaries without the need for explicit boundary condition enforcement, resulting into a highly efficient numerical eigenvalue solver.

The method also has the ability to directly compute the quasi-steady response to external strain fields (zero-frequency limit of the dispersion relations), which are a necessary component for the study of elliptical instability. In particular, elliptical instability considers the perturbation as a response to an external strain field with an assumed form:
\begin{equation}
\begin{aligned}
\Tilde{\bar{v}}_r(r,\phi,z) &= \bar{v}_r(r)\sin(2\phi) \\
\Tilde{\bar{v}}_\phi(r,\phi,z) &= \bar{v}_\phi(r)\cos(2\phi) \\
\Tilde{\bar{v}}_z(r,\phi,z) &= \bar{v}_z(r)\cos(2\phi) \\
\end{aligned}\, ,
\end{equation}
and requests the following boundary behaviours at far field:
\begin{equation}
\left.\begin{matrix}
\bar{v}_r \rightarrow r
\\
\bar{v}_\phi \rightarrow r
\\ 
\bar{v}_z \rightarrow 0
\end{matrix}\right\}\,\text{as } r\rightarrow \infty\,.
\end{equation}
Despite the differences in form and boundary conditions to the eigenvalue problem (\ref{eqn:evp}), such perturbation is still an eigenmode of the unperturbed base flow. Given the assumed form of the disturbance field and $\bar{\sigma} = 0$, (\ref{eqn:howard-grupta}) can be reduced to a boundary-value problem in the absence of background axial flow, which reads
\begin{equation}
    r^2\dv[2]{g(r)}{r} + 5\,r\dv{g(r)}{r} - \frac{r^2\dv{}{r}{W_z}}{V_\phi}g(r)
    = 0 \, ,
    \label{eqn:bvp}
\end{equation}
where the solution $g(r) \equiv \frac{1}{r^2}(r\bar{v}_r)$ has boundary conditions $g(r)\rightarrow 1$ as $r\rightarrow\infty$ and $g(r)=\order{1}$ as $r\rightarrow 1$, and $V_\phi$ and $W_z$ are the azimuthal velocity component and axial vorticity component of the base flow. 

\begin{figure}
    \centering
    \includegraphics[width=0.6\textwidth]{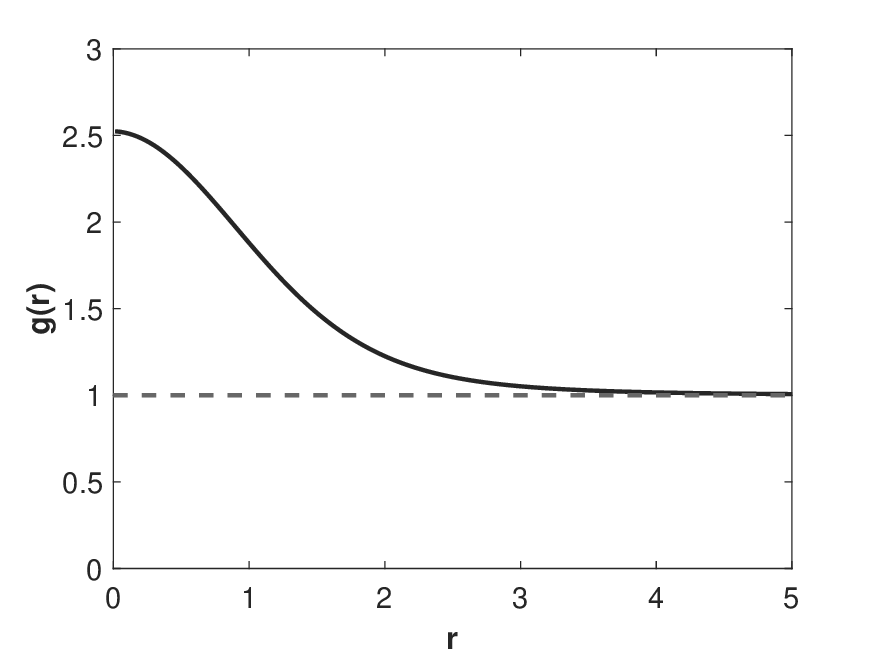}
    \caption{$g(r)$ as a function of $r$ with $s_0 \equiv g(0)= 2.524557$ and $g(r)\rightarrow 1$ at far field.}
    \label{fig:elliptical-strain}
\end{figure}

The same boundary-value problem has appeared in \citet{saffman_1975,dizes_1999,feys_maslowe_2016} and is typically solved via shooting methods \citep[see][]{feys_maslowe_2016}. Here, we note that the function $g(r)$ matches the boundary behaviours of the $0^{\mathrm{th}}$-order mapped associated Legendre polynomials, so (\ref{eqn:bvp}) can be solved using MLEGS directly. According to the boundary conditions, $g(r)$ can be represented as
\begin{equation}
    g(r) = \sum_{n=0}^\infty g^0_n P^0_{L_n} \, ,
\end{equation}
where $P^0_{L_n}$ is the $n^{\mathrm{th}}$-degree mapped associated Legendre polynomial of $0^{\mathrm{th}}$ order. We seek a spectral operator $\mathsfbi{G}$ such that,
\begin{equation}
\begin{aligned}
    \mathsfbi{G}\,[\bm{g}^0] &\equiv \mathsfbi{T}^{-1}\bigg[r^2\dv[2]{}{r}\bm{g} + 5\,r\dv[]{}{r}\bm{g} - \frac{r^2\dv{}{r}{W_z}}{V_\phi}\bm{g}\bigg]\\
    &= \mathsfbi{T}^{-1}\bigg[r\dv{}{r}{\big(r(\dv{}{r}{\bm{g}})\big)} + 4\cdot r\dv{}{r}{\bm{g}}\bigg] - \mathsfbi{T}^{-1}\bigg[\frac{r^2\dv{}{r}{W_z}}{V_\phi}\bm{g}\bigg]
    \label{eqn:discretized-bvp}
\end{aligned}
\end{equation}
where
\begin{equation}
    \bm{g} = \begin{vmatrix} g(r_0) \\ \vdots \\ g(r_M) \end{vmatrix} \, \text{and } \bm{g^0} =\begin{vmatrix} g^0_0 \\ \vdots \\ g^0_N \end{vmatrix}
\end{equation}
are vectors of the physical collocation points and the truncated spectral coefficients of $g(r)$ with a spectral projection operator such that $\bm{g} = \mathsfbi{T}[\bm{g}^0]$. The spectral operator for $r\dv{}{r}\{\cdot\}$, denoted as $\mathsfbi{D}$, is a band matrix operator as defined in \citet{matsushima_marcus_1997}, and the only remaining term in (\ref{eqn:discretized-bvp}) is the multiplication with $H(r) = {r^2 \dv{}{r} {W_z}}/{V_\phi}$, which can be constructed in the physical space as a diagonal operator:
\begin{equation}
    \mathsfbi{H} \equiv 
    \begin{pmatrix}
    H(r_0) &  0 & \cdots\\ 
    0 & H(r_1)  & \cdots \\ 
    \vdots & \vdots & \ddots
    \end{pmatrix} \, .
\end{equation}
Hence,
\begin{equation}
    \mathsfbi{G} = 
    \mathsfbi{D}^2 
    + 4\,\mathsfbi{D}
    + \mathsfbi{T}^{-1}\mathsfbi{H}\mathsfbi{T}
\end{equation}
gives the matrix spectral operator $\mathsfbi{G}$ whose null space contains the solution of (\ref{eqn:bvp}), and $g(r)$ can be obtained by transforming the non-trivial zero-frequency eigenvector of $\mathsfbi{G}$ to the physical space.  The numerical solution $g(r)$ is plotted in figure~\ref{fig:elliptical-strain}, whose value at origin, $s_0 = 2.524557$, matches the numerical integration result computed by \citet{dizes_1999}.

\bibliographystyle{jfm}
\bibliography{main}

\end{document}